\DeclareMathOperator{\Tr}{Tr}
\DeclareMathOperator{\bTr}{\bar{T}r}
\begin{document}
\title{
Combating Informational Denial-of-Service (IDoS) Attacks: Modeling and Mitigation of Attentional Human Vulnerability 
}
\titlerunning{IDoS Attacks: Modeling and Mitigation of Attentional Human Vulnerability }
% If the paper title is too long for the running head, you can set
% an abbreviated paper title here
%

\author{Linan Huang \and
Quanyan Zhu
\thanks{This work is partially supported by grants SES-1541164, ECCS-1847056, CNS-2027884, and BCS-2122060 from National Science Foundation (NSF), and grant W911NF-19-1-0041 from Army Research Office (ARO).} 
}
\authorrunning{L. Huang and Q. Zhu}
% First names are abbreviated in the running head.
% If there are more than two authors, 'et al.' is used.
%
\institute{Department of Electrical and Computer Engineering, New York University\\
 2 MetroTech Center, Brooklyn, NY, 11201, USA \\
\email{\{lh2328,qz494\}@nyu.edu}
}
\maketitle              % typeset the header of the contribution
\begin{abstract} %207
This work proposes a new class of proactive attacks called the Informational Denial-of-Service (IDoS) attacks that exploit the attentional human vulnerability. 
By generating a large volume of feints, IDoS attacks deplete the cognitive resources of human operators to prevent humans from identifying the real attacks hidden among feints. 
This work aims to formally define IDoS attacks, quantify their consequences, and develop human-assistive security technologies to mitigate the severity level and risks of IDoS attacks. 
To this end, we use the semi-Markov process to model the sequential arrivals of feints and real attacks with category labels attached in the associated alerts. 
The assistive technology strategically manages human attention by highlighting selective alerts periodically to prevent the distraction of other alerts. 
A data-driven approach is applied to evaluate human performance under different Attention Management (AM) strategies. 
Under a representative special case, we establish the computational equivalency between two dynamic programming representations to reduce the computation complexity and enable online learning with samples of reduced size and zero delays. 
%simplify the theoretical computation and the online learning. 
A case study corroborates the effectiveness of the learning framework. 
The numerical results illustrate how AM strategies can alleviate the severity level and the risk of IDoS attacks. 
Furthermore, the results show that the minimum risk is achieved with a proper level of intentional inattention to alerts, which we refer to as the \textit{law of rational risk-reduction inattention}.
%we Rational Risk-Reduction Inattention
%characterize the fundamental limits of the minimum severity level under all AM strategies and the maximum length of the inspection period to reduce the IDoS risks. 
%150--250 words.

\keywords{Human vulnerability \and Alert fatigue \and Cyber feint attack \and Temporal-difference learning  \and Risk Analysis \and Attention management \and Cognitive load }
\end{abstract}

\section{Introduction}
%Human is the weakest link in cybersecurity due to the acquired vulnerabilities (e.g., lack of security awareness and incentives) and innate vulnerabilities (e.g., bounded rationality and attention limitation). 
Human is the weakest link in cybersecurity due to their innate vulnerabilities, including bounded rationality and limited attention. 
These human vulnerabilities are difficult to mitigate through short-term training, rules, and incentives. As a result, sophisticated attacks, such as Advanced Persistent Threats (APTs) and supply-chain attacks, commonly exploit them to breach data and damage critical infrastructures. 
 Attentional vulnerabilities have been exploited by adversaries to create visual blindspots or misperceptions that can lead to erroneous outcomes. 
One way to exploit the attentional vulnerabilities is to stealthily evade the attention of human users or operators as we have seen in many cases of social engineering and phishing attacks. It is a passive approach where the attacker does not change the attention patterns of the human operators and intends to exploit the inattention to evade the detection. 
In contrast, a proactive attacker can strategically influence attention patterns. For example, an attacker can overload the attention of human operators with a large volume of feints and hide real attacks among them \cite{WinNT1}. 
This class of proactive attacks aims to increase the perceptual and cognitive load of human operators to delay defensive responses and reduce detection accuracy. 
%reduce their response time, detection accuracy, and performance efficiency.  
We refer to this class of attacks as the Informational Denial-of-Service (IDoS) attacks. 

IDoS is no stranger to us in this age of information explosion. We are commonly overloaded with terabytes of unprocessed data or manipulated information on online media. However, the targeted IDoS attacks on specific groups of people, e.g., security guards, operators at the nuclear power plant, and network administrators, can pose serious threats to lifeline infrastructures and systems. The attacker customizes attack strategies to targeted individuals or organizations to quickly and maximally deplete their human cognitive resources. 
As a result, common methods (e.g., set tiered alert priorities) to mitigate alert fatigue are insufficient under these targeted and intelligent attacks that generate massive feints strategically. 
%IDoS attacks have significant consequences, including distorted perceptions and erroneous security decisions. They also threaten the security of human-machine interaction applications, including semi-autonomous driving and industrial control systems. 
There is a need to understand this phenomenon, quantify its consequence and risks, and develop new mitigation methods. 
%Although feints have been extensively used in physical worlds, including sports and wars, feint-induced IDoS attacks in cyberspace have not been formally defined and quantified. 
In this work, we establish a probabilistic model to formalize the definition of IDoS attacks, evaluate their severity levels, and assess the induced cyber risks. 
The model captures the interaction among attackers, human operators, and assistive technologies as highlighted by the orange, green, and blue backgrounds, respectively, in Fig. \ref{fig:AHM}. 
\begin{figure}[h]
\centering
  \includegraphics[width=1\linewidth]{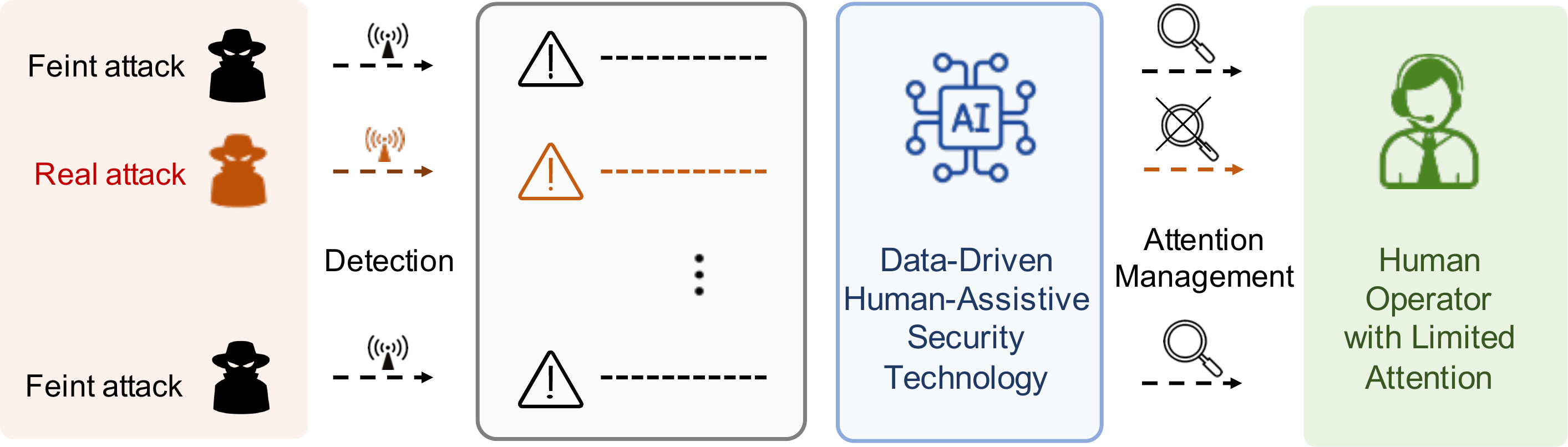}
  \caption{
Interaction among IDoS attacks, 
  human operators, and assistive technologies. 
  }
\label{fig:AHM}
\end{figure}

Attackers generate feints and real attacks that trigger alerts of detection systems. 
Due to the detection imperfectness, human operators need to inspect these alerts in detail to determine the attacks' types, i.e., feint or real, and take responsive security decisions.  
The accuracy of the security decisions depends on the inspection time and the operator's sustained attention without distractions. 
The large volume of feints exerts an additional cognitive load on each human operator and makes it hard to focus on each alert, which can significantly decrease the accuracy of his security decisions and increase cyber risks. 
Accepting the innate human vulnerability, we aim to develop assistive technologies to compensate for the human attention limitation. 
Evidence from the cognitive load theory \cite{wickens2015engineering} has shown that divided attention to multiple stimuli can degrade the performance and cost more time than responding to these stimuli in sequence. 
Hence, we design the \textit{Attention Management} (AM) strategies to intentionally make some alerts inconspicuous so that the human operator can focus on the other alerts and finish the inspection with less time and higher accuracy. 
We further define risk measures to evaluate the inspection results, which serves as the stepping stone to designing adaptive AM strategies to mitigate attacks induced by human vulnerabilities. 

Due to the unpredictability and complexity of human behaviors, cognition, and reasoning, it is challenging to create an exact human model of the IDoS attack response. 
Therefore, we provide a probabilistic characterization of human decisions concerning AM strategies and other observable features from the alerts. By assuming a sequential arrival of attacks with semi-Markov state transitions, we conduct a data-driven approach to evaluate the inspection results in real-time. 
Under a mild assumption, we prove the \textit{computational equivalency} between two Dynamic Programming (DP) representations to simplify the value iteration and the Temporal-Difference (TD) learning process. 
Numerical results corroborate the effectiveness of learning by showing the convergence of the estimated value to the theoretical value. 
Without an AM strategy, we show that both the severity level and the risk of IDoS attacks increase with the product of the arrival rate and the detection threshold. 
With the assistance of AM strategies, 
%we characterize two fundamental limits, i.e., the minimum severity level and the maximum length of the inspection period. 
%Then, 
we illustrate how different AM strategies can alleviate the severity level of IDoS attacks. Concerning the IDoS risks, we illustrate the tradeoff between 
the quantity and quality of the inspection, which leads to a meta-principle referred to as the \textit{law of rational risk-reduction inattention}. 
%a high quantity of inspection with low accuracy and a low quantity of inspection with high accuracy. 
%focusing on a small number of alerts for high accuracy and a large number of alerts to avoid misdetection. 

\subsection{Related Works}
\subsubsection{Human Vulnerability in Cyber Space} 
%Human vulnerabilities in cyberspace can be classified into innate vulnerabilities and acquired vulnerabilities. 
Attacks that exploit human vulnerabilities, e.g., insider threats and social engineering, have raised increasing concerns in cybersecurity. 
Previous works have focused to design security rules \cite{casey2016compliance} and incentives  \cite{huang2021duplicity} to increase human employees' compliance and elicit desirable behaviors. 
However, compared to the lack of security awareness and incentives, some human vulnerabilities (e.g., attention limitation and bounded rationality) cannot be altered or controlled. Thus, we need to design assistive technologies to compensate for the `unpatchable' human vulnerabilities. 
In \cite{huang2021inadvert}, adaptive attention enhancement strategies have been developed to engage users' attention and maximize the rate of phishing recognition. 
Compared to \cite{huang2021inadvert} that defends against stealthy attacks and the exploitation of inattention, this work combats proactive attackers that overload human attention. %by pivoting human attention on a small number of selected alerts. 
% This work aims to create an assistive technology oppositely (WHAT DOES IT MEAN "OPPOSITELY"?). 
% Rather than enhancing the attention of human operators, we intentionally pivot their attention on selected alerts so that they have more time to focus on other alerts. (NOT CLEAR WHAT IT MEANS!)

%\subsubsection{Information Overload and Alert Fatigue} 
%The explosion of information is a double-edged sword. If 
%Information overload, or infoxication has raised 

\subsubsection{Data-Driven Approach for Security and Resilience}
As more data becomes available, data-driven approaches have been widely used to create cyber situational awareness and enhance network security and resilience \cite{huang2021reinforcement}, e.g., Bayesian learning for parameter uncertainty \cite{huang2020dynamic,huang9494340} and Q-learning for honeypot engagement \cite{huang2019adaptive}. 
The authors in \cite{zhao2019bidirectional} have studied the detection of feint attacks by a few-shot
deep learning algorithm. However, they have modeled feints as multi-stage attacks and focused on detecting the revised causal relationship. Here, we focus on how feints affect human operators' cognitive resources and the consequent security decisions. 
The TD learning method helps address the long-standing challenge of human modeling and further enables us to evaluate human performance efficiently and robustly. 

%\subsubsection{Data-Driven Approach to Enhance Security and Resilience} 

%https://www.tobiipro.com/learn-and-support/learn/eye-tracking-essentials/types-of-eye-movements/
%https://en.wikipedia.org/wiki/Visual_field
%https://en.wikipedia.org/wiki/Peripheral_vision
%https://www.tobiipro.com/learn-and-support/learn/eye-tracking-essentials/

\subsection{Notations and Organization of the Paper}
We summarize notations in Table \ref{table:notation}.  
The rest of the paper is organized as follows. 
Section \ref{sec:IDoS model} introduces the system modeling for IDoS attacks, alert generations, and the inspections of human operators. 
Based on the system model, we present a Semi-Markov Process (SMP) model in Section \ref{sec:SMP} to evaluate human performance, the severity level, and the risks of IDoS attacks. 
We present a case study in Section \ref{sec:casestudy} to corroborate our results and  Section \ref{sec:conclusion} concludes the paper. 

\begin{table}[th]
\centering
\caption{Summary of variables and their meanings. 
\label{table:notation}}
\begin{tabularx}{\columnwidth}{X l} %\textwidth %{|l|X|} %The column type X takes all the space left from other columns till \textwidth or whatever you specify in the first argument.
     \hline
\textbf{Variable} &  \textbf{Meaning} \\ \hline
$t^k\in [0,\infty)$  &  Arrival time of the $k$-th attack \\
$\tau^k=t^{k+1} - t^k\in [0,\infty)$       &  Time duration between $k$-th and $(k+1)$-th attack\\
$\tau_{IN}^{h,m}:=\sum_{{k}'=hm}^{hm+m-1} \tau^{{k}'}$ & Inspection time at inspection stage $h\in  \mathbb{Z}^{0+}$  \\
$w^k\in \mathcal{W}:=\{w_{FE},w_{RE},w_{UN}\}$       &  Security decision at attack stages $k\in \mathbb{Z}^{0+}$\\
$a_m\in \mathcal{A}$ & Attention management strategy of period $m\in \mathbb{Z}^+$ \\
$\theta^k\in \Theta:=\{\theta_{FE},\theta_{RE}\}$       &  Attack's type at  attack stages $k\in \mathbb{Z}^{0+}$\\
 $\bar{\theta}^{h}:=[\theta^{hm},\cdots,\theta^{hm+m-1}]$  & Consolidated type at inspection stage $h\in  \mathbb{Z}^{0+}$  \\ 
$s^k\in \mathcal{S}$       &  Alert's category label at attack stages $k\in \mathbb{Z}^{0+}$\\
$x^h:=[s^{hm},\cdots,s^{hm+m-1}]$ & Consolidated state at inspection stage $h\in \mathbb{Z}^{0+}$ \\
$\Tr(s^{k+1}|s^k;\theta^k)$ & Transition probability from $s^k$ to $s^{k+1}$ under attack type $\theta^k$ \\
$\bTr(x^{h+1}|x^h ;\bar{\theta}^{h} )$  & Transition function of the consolidated state \\
% $1-\hat{p}_{CD}(x^h,a_m)$ &  Consolidated severity level of IDoS attacks under $x^h$ and $a_m$.  \\
%$\Pr(w^{k}|s^k, a_m;\theta^k)$ & Decision probability  \\
%$\mathcal{N}(s^k,\theta^k)$ & Set of $N$ thresholds that affect the decision probability \\
\hline
\end{tabularx}
\end{table}

\section{System Modeling of Informational Denial-of-Service Attacks}
\label{sec:IDoS model}
In Section \ref{sec:motivatingexample}, we present a high-level structure of the Informational Denial-of-Service (IDoS) attacks and use a motivating example to illustrate their causes, consequences, and mitigation methods.  
Then, we introduce the system modeling of sequential arrivals of alerts that are triggered by feints and real attacks in Section \ref{sec:attack}. 
The manual inspection and the attention management strategies are introduced in Section \ref{sec:inspectionAM}. 
Human operators inspect each alert in real-time to determine the associated attack's hidden type. 
Meanwhile, the assistive technology automatically designs and implements the optimal attention management strategy to compensate for human attention limitations. 
%The real-time inspection provided by human inspectors and the attention management strategy provided by the assistive technology are introduced in Section \ref{sec:inspectionAM} to determine the attack's hidden type and compensate for human attention limitations, respectively. 

\subsection{High-Level Abstraction and Motivating Example}
\label{sec:motivatingexample}
As shown in Fig. \ref{fig:abstraction}, there is an analogy between the Denial-of-Service (DoS) attacks in communication networks and the Informational Denial-of-Service (IDoS) attacks in the human-in-the-loop systems. Both of them achieve their attack goals by exhausting the limited resources. 
DoS attacks happen when the attacker generates a large number of superfluous requests to deplete the computing resource of the targeted machine and prevent the fulfillment of legitimate services. 
Analogously, IDoS attacks create a large amount of unprocessed information to deplete cognitive resources of human operators and prevent them from acquiring the knowledge contained in the information. 
\begin{figure}[h]
\centering
  \includegraphics[width=.74\linewidth]{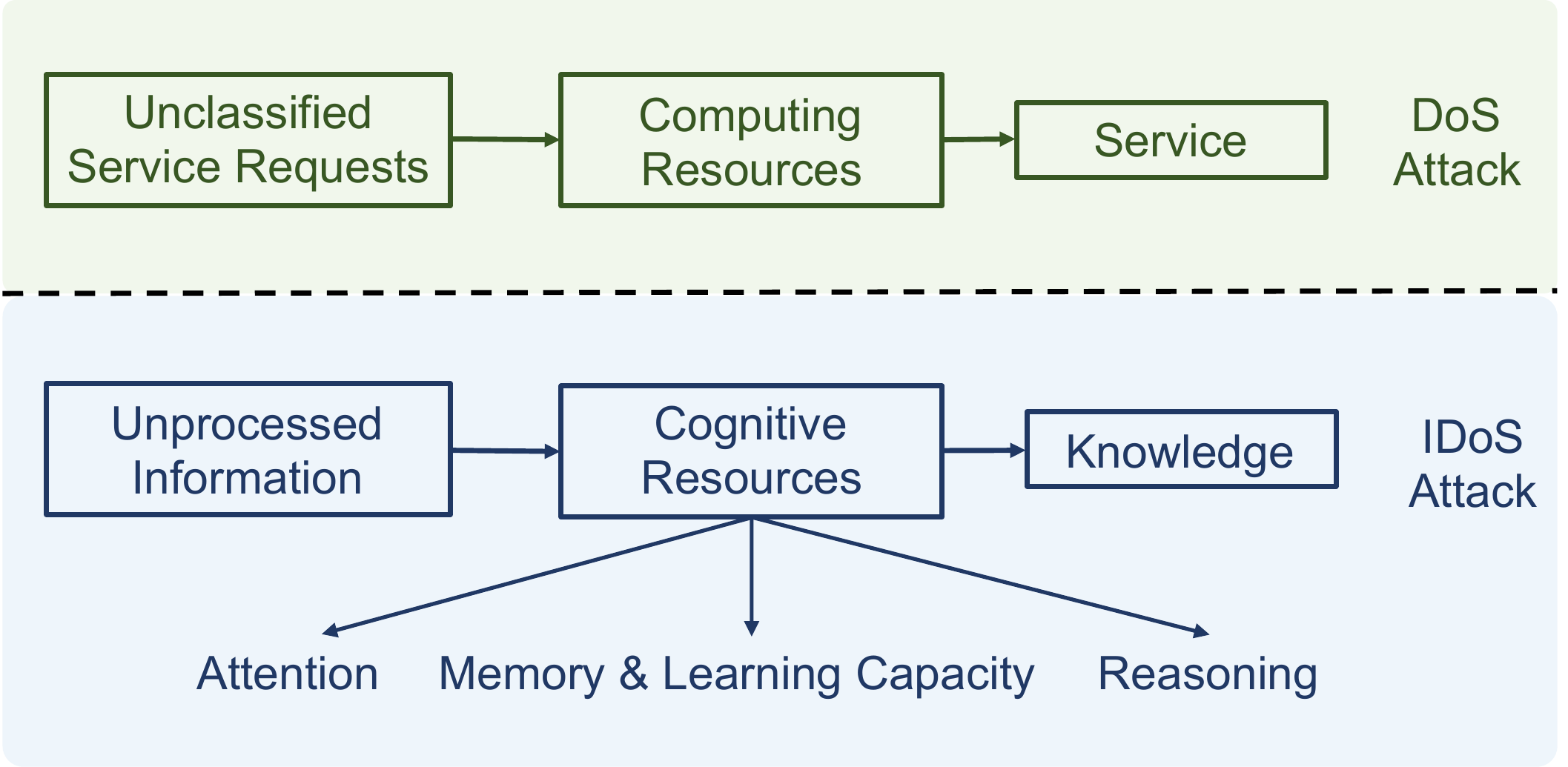}
  \caption{
The service request fulfillment process under DoS attacks and the information processing flows under IDoS attacks in green and blue backgrounds, respectively. 
  }
\label{fig:abstraction}
\end{figure}
We list several assailable cognitive resources under IDoS attacks as follows. 
\begin{itemize}
    \item \textbf{Attention}: 
    Paying sustained attention to acquire proper information is costly. 
    From an economic perspective, inattention occurs when the cost of information acquisition is lower than the attention cost measured by the information entropy \cite{sims2003implications}. 
    %Attention is the behavioral and cognitive process of selectively concentrating on a discrete aspect of information, whether considered subjective or objective, while ignoring other perceivable information. 
 IDoS attacks generate feints to distract the human from the right information. An excessive number of feints prohibit the human from process any information. 
     \item \textbf{Memory and Learning Capacity}:  Humans have limited memory and learning capacity. Humans cannot remember the details or learn new things if there is an information overload \cite{wickens2015engineering}. 
    \item \textbf{Reasoning}: Human decision-making consumes a large amount of energy, which is one of the reasons why we have two modes of thought \cite{kahneman2011thinking} (`system $1$' thinking is fast, instinctive, and emotional; while `system $2$' thinking is slower and more logical). IDoS attacks can exert a heavy cognitive load to prevent humans from deliberative decisions that use the `system $2$' thinking. Moreover, evidence shows the \textit{paradox of choice} \cite{schwartz2004paradox}; i.e., rich choices can bring anxiety and prevent humans from making any decisions. 
\end{itemize}

When these cognitive resources are exhausted, the information cannot be processed correctly and timely and serves as noise that leads to \textit{alert fatigue} \cite{ban2021combat}. 
We use operators in the control room of nuclear power plants as a stylized  example to illustrate the consequences of IDoS attacks and motivate the need for the security technology to assist human operators against IDoS attacks. 
%illustrate the above phenomena. motivating
\begin{figure}[h]
\centering
  \includegraphics[width=.5\linewidth]{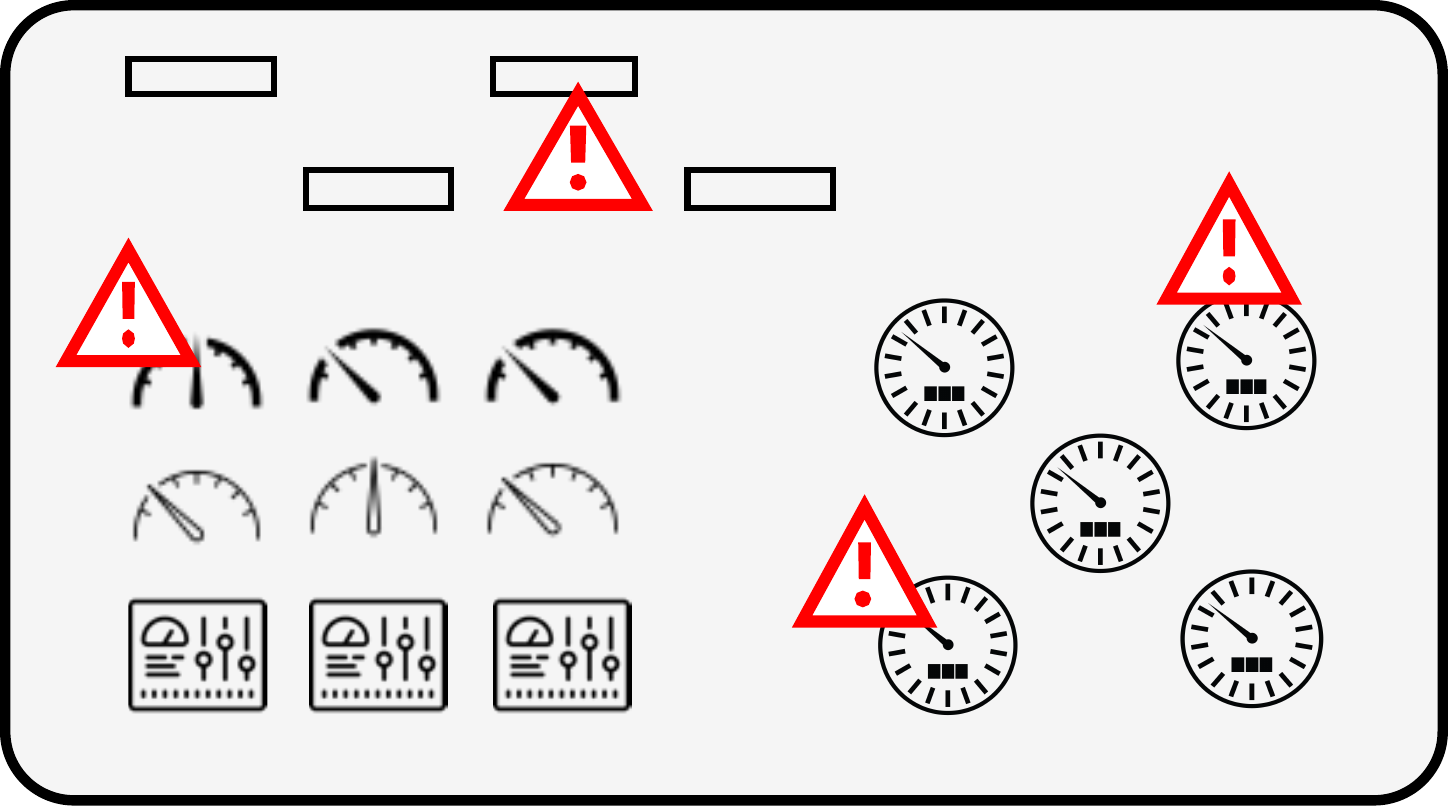}
  \caption{
  A stylized example of the monitor screen for operators in the control room of nuclear power plants. The red triangles represent warnings and security messages. 
  }
\label{fig:example}
\end{figure}
In Fig. \ref{fig:example}, a monitor screen contains meters that show the real-time readings of the temperature, pressure, and flow rate in a nuclear power plant. 
Based on the pre-defined generation rules, warnings and messages pop up at different locations. 
Due to the complexity of the nuclear control system, the inspection of these alerts consumes the operator's time and cognitive resources. 
The attempt to inspect all alerts and the constant switching among them can lead to missed detection and erroneous behaviors. 
If the alerts are generated strategically by attacks, they may further mislead humans to take actions in the attacker's favor; e.g., focusing on feints and ignoring the real attacks that hide among feints. 

%(Expand the following into a paragraph)
One way to mitigate IDoS attacks is to train the operators or human users to deal with the information overload and remain vigilant and productive under a heavy cognitive load. 
However, attentional training can be time-consuming and the effectiveness is not guaranteed. 
The second method is to recruit more human operators to share the information load. It would require the coordination of the operator team and can incur additional costs of human resources. 
The third method is to develop assistive technologies to rank and filter the information to alleviate the cognitive load of human operators. It would leverage past experiences and data analytics to pinpoint and prioritize critical alerts for human operators to process. 
The first two methods aim to increase the capacity or the volume of the cognitive resources in Fig. \ref{fig:abstraction}. 
The third method pre-processes the information so that it adapts to the capacity and characteristics of cognitive resources. 

% Mitigation methods of IDoS include the following: 
% \begin{itemize}    
%     \item Train the operators to deal with the information overload. 
%     \item Coordinate a team of human operators to share the information load. 
%     \item Develop assistive technology to rank and filter the information to alleviate the cognitive load of human operators. (visual aids can be used to )
% \end{itemize}
%This work focuses on the security assistive technology. 

\subsection{Sequential Arrivals of Alerts Triggered by Feints and Real Attacks}
\label{sec:attack}
In this work, we focus on the temporal aspect of the alerts (i.e., the frequency and duration of their arrivals). 
The future work will incorporate their spatial locations on the monitor screen as shown in Fig. \ref{fig:example}. 
As highlighted by the orange background in Fig. \ref{fig:newMDP}, attacks arrive sequentially at time $t^k, k\in \mathbb{Z}^{0+}$ where $t^0=0$. 
Let $\tau^k:=t^{k+1}-t^k\in [0,\infty)$ be the inter-arrival time between the $(k+1)$-th attack and the $k$-th attack for all $k\in \mathbb{Z}^{0+}$. 
We refer to the $k$-th attack equivalently as the one at \textit{attack stage} $k\in \mathbb{Z}^{0+}$. 
 \begin{figure}[h]
\centering
\includegraphics[width=0.9 \textwidth]{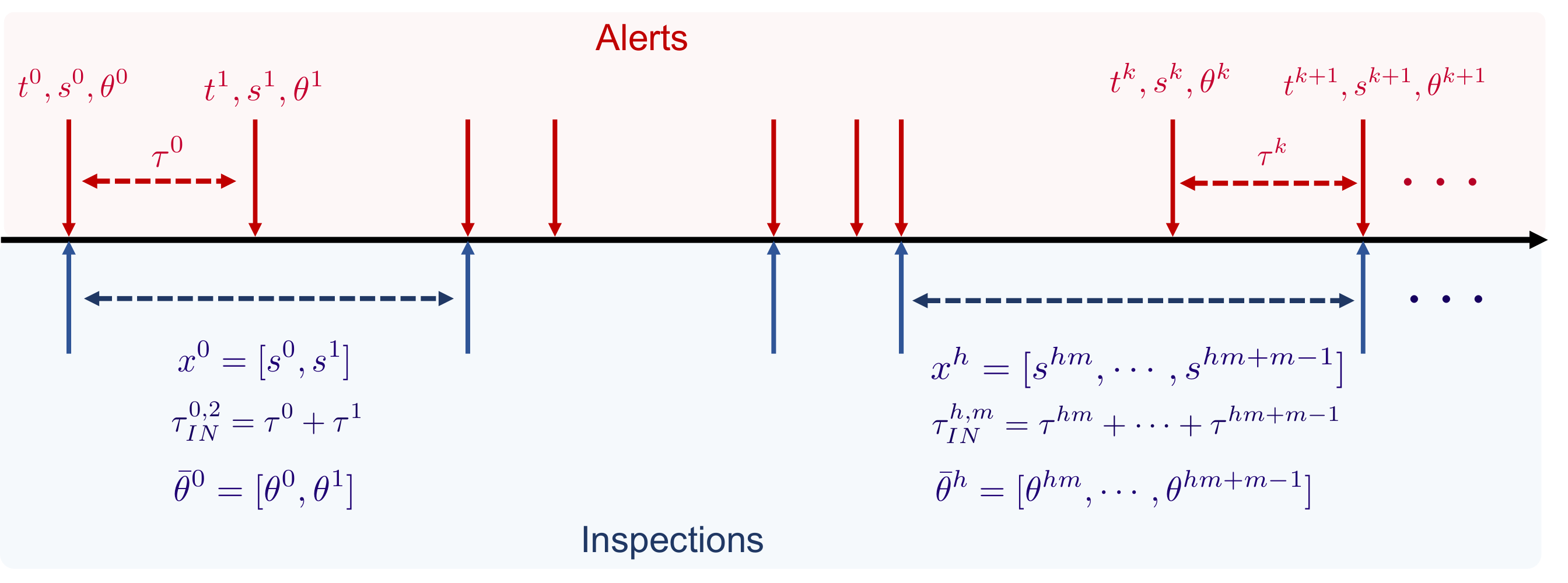}
\caption{ 
The sequential arrival of alerts at \textit{attack stage} $k\in \mathbb{Z}^{0+}$
and 
the periodic manual inspections at \textit{inspection stage} $h\in  \mathbb{Z}^{0+}$ under AM strategy $a_m\in \mathcal{A}$ where $m=2$.  
}
\label{fig:newMDP}
\end{figure}  

Each attack can be either a feint (denoted by $\theta_{FE}$) or a real attack (denoted by $\theta_{RE}$) with probability $b_{FE}\in [0,1]$ and $b_{RE}\in [0,1]$, respectively, where $b_{FE}+b_{RE}=1$. 
We assume that both types of attacks trigger alerts with the same time delay. % trigger alerts with different probability. %the probability does not matter, we cannot do anything if we do not detect them. We can only respond to alerts. 
%mis-detection does not matter and false alert from normal user make it worse. 
%The normal users may also accidentally trigger alerts. %this will make the tau distribution hard to compute. 
Thus, there is a one-to-one mapping between the sequence of attacks and alerts, and we can consider the zero delay time without loss of generality. 
The alerts cannot reflect the \textit{attack's type} denoted by $\theta^k\in \Theta:=\{\theta_{FE},\theta_{RE}\}$ at all attack stages $k\in \mathbb{Z}^{0+}$. 
However, the alerts can provide human operators with a \textit{category label} from a finite set $\mathcal{S}$ based on observable features or traces of the associated attacks, e.g., the attack locations as shown in Section \ref{sec:casestudy}. 
%https://www.cisecurity.org/cybersecurity-threats/alert-level/
% Moreover, the set of category labels can indicate the level of lethality, criticality, or defensibility as follows. 
% \begin{itemize}
%     \item \textbf{Lethality}: how much damage can the attack incur? For example, attacks with root privilege can inflict severer damage than the ones with limited access. 
%     \item \textbf{Criticality}: how important is the target of the attack? For example, cloud servers are more critical than client servers. 
%     \item \textbf{Defensibility}: how likely can the system counteract the attack? For example, operating systems with applicable patches can better prevent attacks than the ones without anti-virus software. 
%     %\item \textbf{Resiliency}: how much and how quickly can the system recover from the attack?  For example, systems with distributed data storage restore the data damage more quickly than  the ones with centralized storage. 
% \end{itemize}
% %% we obtain the risk=Lethality+Criticality-Defensibility-Resiliency %this could be the journal extension where we consider adaptive strategy. When high risk alert come, we should switch if the current one is of low risk. 
% %For example, the category label can indicate the target of the attack or the severity of the attack's potential outcomes. 
We denote the alert's category label at attack stage $k\in \mathbb{Z}^{0+}$ as $s^k\in \mathcal{S}$. 

%Based on the severity and the time-sensitivity of their potential outcomes, these alerts categorize attacks into $N_s$ categories represented by a finite set $\mathcal{S}$. 
%We assume the classification is correct. 
%For example, $\mathcal{S}=\{s_1,s_2,s_3\}$ can contain $N_s=3$ categories of alerts, where $s_1,s_2,s_3$ indicate attacks of low, medium, and high risks, respectively. 
%The alerts provide the human operator with the $k$-th attack's \textit{severity level} denoted by $s^k\in \mathcal{S}$ for all $k\in \mathbb{Z}^{0+}$. 

\subsection{Manual Inspection and Attention Management}
\label{sec:inspectionAM}
%we may extend it to multiple operators, then it becomes a sequential matching problem. 
Since an alert does not directly reflect whether the attack is feint or real, human operators need to inspect the alert to determine the hidden type, which leads to three \textit{security decisions}: the attack is feint (denoted by $w_{FE}$), the attack is real (denoted by $w_{RE}$), or the attack's type is unknown (denoted by $w_{UN}$). 
We use $w^k\in \mathcal{W}:=\{w_{FE},w_{RE},w_{UN}\}$ to denote the human operator's security decision of the $k$-th alert. 
Each human operator has limited attention and cannot inspect multiple alerts simultaneously. Moreover, the human operator requires sustained attention on an alert to make an accurate security decision. 
Frequent alert pop-ups can distract humans from the current alert inspection and result in \textit{alert fatigue} and the \textit{paradox of choice} as illustrated in Section \ref{sec:motivatingexample}. 
To compensate for the human's attention limitation, we can intentionally make some alerts less noticeable, e.g.,  without sounds or in a light color. 
Then, the human can pay sustained attention to the alert currently under inspection. 
These inconspicuous alerts can be assigned to other available inspectors with an  additional cost of human resources. 
If these alerts are time-insensitive, they can also be queued and inspected later by the same operator at his convenience. 
However, in practice, the number of alerts usually far exceeds the number of available inspectors, and the alerts cannot tolerate delay. 
Then, these alerts are dismissed as a tradeoff for the timely and accurate inspection of the other highlighted alerts. 
In this case, these inconspicuous alerts are not inspected and automatically assigned the security decision $w_{UN}$. 

In this paper, we focus on the class of \textit{Attention Management (AM) strategies}, denoted by $\mathcal{A}:=\{a_m\}_{m\in \mathbb{Z}^+}$, that highlight alerts periodically to engage operators in the alert inspection. 
We assume that the human operator can only notice and inspect an alert when it is highlighted. 
Then, AM strategy $a_m\in \mathcal{A}$ means that the human operator inspects the alerts at attack stages $k=hm, h\in  \mathbb{Z}^{0+}$. 
%, where $H:=\left \lfloor{K/M}\right \rfloor$ is the greatest integer less than $K/M$. 
%inspects every $m\in\mathcal{M}:=\{1,2,\cdots,M\}$ alerts and ignore the rest. 
We refer to the attack stages during the $h$-th inspection as the \textit{inspection stage} $h\in  \mathbb{Z}^{0+}$. 
Then, under AM strategy $a_m\in \mathcal{A}$, each inspection stage contains $m$ attack stages as shown in the blue background of Fig. \ref{fig:newMDP}. 
The $h$-th inspection has a duration of $\tau_{IN}^{h,m}:=\sum_{{k}'=hm}^{hm+m-1} \tau^{{k}'}$ for all $h\in  \mathbb{Z}^{0+}$.

\subsubsection{Decision Probability with $N$ Thresholds}
The human operator's security decision depends on the attack's type, the category label, and the AM strategy. 
We refer to $\Pr(w^k|s^k, a_m;\theta^k)$ as the \textit{decision probability}; i.e., the probability of human making decision $w^k\in \mathcal{W}$ when the attack's type is $\theta^k\in \Theta$, the category label is $s^k\in \mathcal{S}$,  and the AM strategy is $a_m\in \mathcal{A}$. 
As a probability measure, the decision probability satisfies 
$
   \sum_{w^k \in \mathcal{W}} \Pr(w^{k}|s^k, a_m;\theta^k)=1, \forall \theta^{k}\in \Theta, \forall s^k\in \mathcal{S}, \forall a_m\in \mathcal{A}. 
$
%For simplicity, let assume the transition probability is not affected by $\theta$. otherwise, the cumulative utility is a function of $\theta$, and it is harder to compute. 

At attack stages where alerts are inconspicuous, i.e., for all $k\neq hm, h\in  \mathbb{Z}^{0+}$, the security decision $w^k$ is $w_{UN}$ with probability $1$; i.e., for any given inspection policy $a_m\in \mathcal{A}$, we have 
$
    \Pr(w^k|s^k, a_m;\theta^k)=\mathbf{1}_{ \{ w^k=w_{UN} \} },\forall s^k\in \mathcal{S}, \forall w^k\in \mathcal{W}, \forall \theta^{k}\in \Theta,  \forall k\neq hm, h\in  \mathbb{Z}^{0+}. 
$
% \[
% \Pr(w^k|\theta^k,a_m)=
% \begin{cases}
% 1 &\text{if } w^k=w_{UN}\\
% 0 &\text{if } w^k\in \{w_{FE},w_{RE}\}
% \end{cases}
% ,\forall \theta^{k}\in \Theta, \forall k\neq hm, h\in \{0,1,2,\cdots\}. 
% \]
At attack stages of highlighted alerts, i.e., for all $k=hm, h\in  \mathbb{Z}^{0+}$, the human operator inspects the $h$-th alert for a duration of $\tau_{IN}^{h,m}$. 
At each inspection stage $h$, a longer period length $m$ induces a longer inspection time $\tau_{IN}^{h,m}=\sum_{{k}'=hm}^{hm+m-1} \tau^{{k}'}$. 
Based on the IDoS model in Section \ref{sec:IDoS model}, different AM strategies only affect the inspection time. 
Thus, we can rewrite the decision probability $\Pr(w^{k}|s^k, a_m;\theta^k)$ as $\Pr(w^{k}|s^k, \tau_{IN}^{h,m};\theta^k)$ at attack stages $k=hm, h\in  \mathbb{Z}^{0+}$. 

Adequate inspection time $\tau_{IN}^{h,m}$ leads to an accurate security decision. 
In this work, we assume that the probability of correct decision-making can be approximated by an increasing step function of the inspection time
%increase of the inspection time has a threshold impact on the decision probability 
as shown in Fig. \ref{fig:thresholdprob}. 
 \begin{figure}[h]
\centering
\includegraphics[width=0.9 \textwidth]{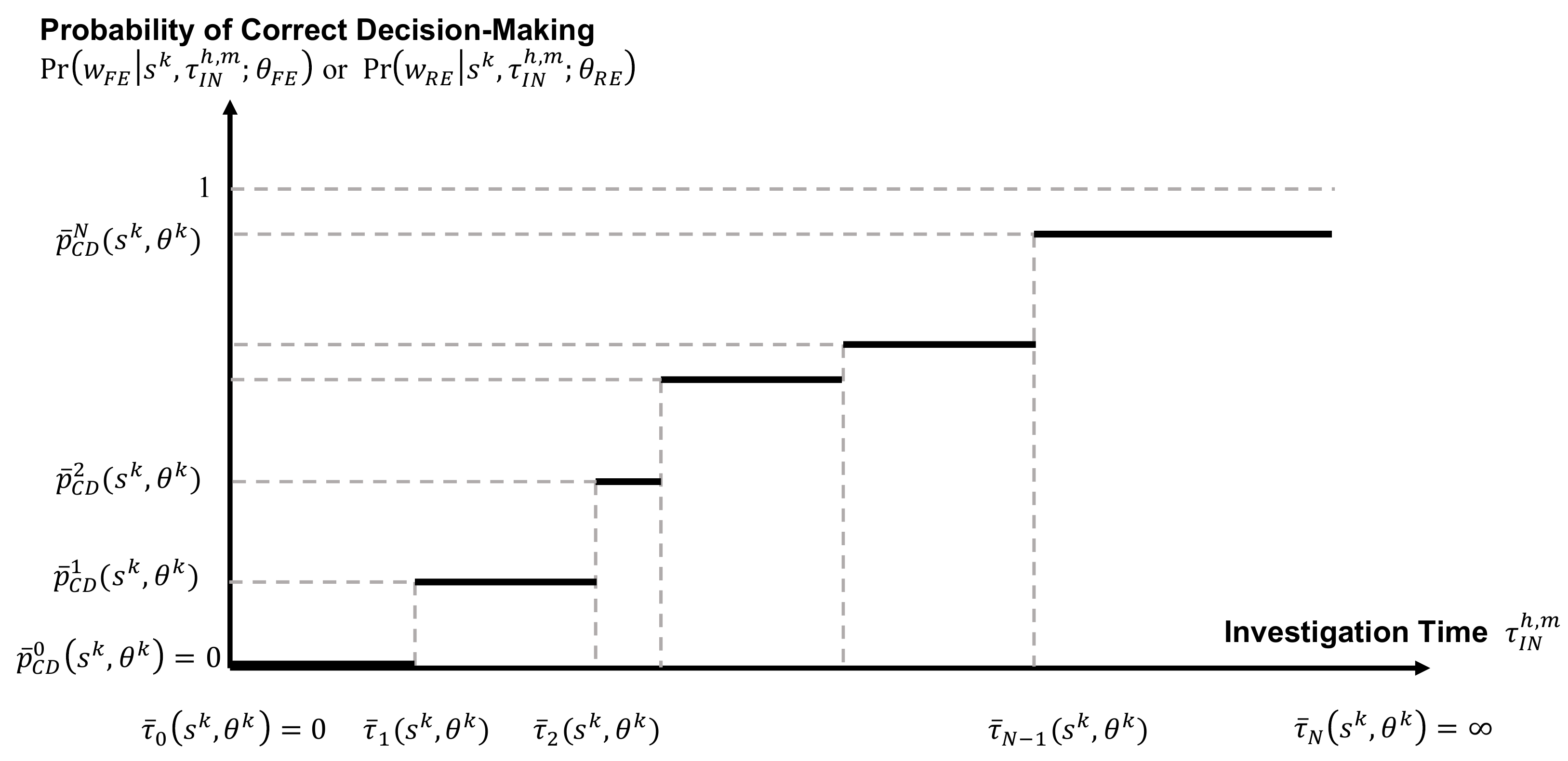}
\caption{ 
The probability of the human operator making correct security decisions, i.e., $\Pr(w_{FE}|s^k,\tau_{IN}^{h,m};\theta_{FE})$ and $\Pr(w_{RE}|s^k,\tau_{IN}^{h,m};\theta_{RE})$, is approximated as an increasing step function of the inspection time $\tau_{IN}^{h,m}$ at inspection stage $h\in  \mathbb{Z}^{0+}$. 
%has a threshold impact on $\Pr(w_{FE}|s^k,\tau_{IN}^{h,m};\theta_{FE})$ and $\Pr(w_{RE}|s^k,\tau_{IN}^{h,m};\theta_{RE})$, i.e., the probability of the human operator making correct security decisions. 
}
\label{fig:thresholdprob}
\end{figure}  
That is, $N+1$ thresholds divide  the support of the random variable $\tau_{IN}^{h,m}$, i.e., $[0,\infty)$, into $N$ regions where the probability of correct security decisions increases. 
We can increase the number of thresholds, i.e., the value of $N$, to improve the accuracy of the approximation. 
For each $s^k\in \mathcal{S}$ and $\theta^k\in \Theta$, we denote the corresponding $N$ thresholds as $\bar{\tau}_n(s^k,\theta^k)\in \mathcal{N}(s^k,\theta^k), n\in \{0,1,\cdots,N\}$, where $\mathcal{N}(s^k,\theta^k)$ is a finite set, $\bar{\tau}_0(s^k,\theta^k)=0$, $\bar{\tau}_N(s^k,\theta^k)=\infty$, and $\bar{\tau}_0(s^k,\theta^k)<\bar{\tau}_N(s^k,\theta^k)<\bar{\tau}_2(s^k,\theta^k)<\cdots<\bar{\tau}_N(s^k,\theta^k)$. 
If $\tau_{IN}^{h,m}$ belongs to the region $n\in \{0,1,\cdots,N\}$, i.e.,  $\bar{\tau}_{n-1}(s^k,\theta^k)<\tau_{IN}^{h,m}<\bar{\tau}_{n}(s^k,\theta^k)$, 
then the decision probabilities under %$\theta^k\in \Theta$ 
$\theta_{FE}$ and $\theta_{RE}$ are represented as \eqref{eq:detection probability FE} and \eqref{eq:detection probability RE}, respectively, 
%for all $k=hm, h\in  \mathbb{Z}^{0+}$, has the following form: 
% \begin{equation}
% \label{eq:detection probability}
% \Pr(w^k|s^k,\tau_{IN}^{h,m};\theta^k)=
% \begin{cases}
% \bar{p}_{CD}^{n-1}(s^k,\theta^k)\in [0,1] &\text{if } w^k=w_{FE} \\
% \bar{p}_{ID}^{n-1}(s^k,\theta^k)\in [0,1] &\text{if } w^k=w_{RE}   \\
% 1-\bar{p}_{CD}^{n-1}(s^k,\theta^k)-\bar{p}_{ID}^{n-1}(s^k,\theta^k) &\text{if } w^k=w_{UN} \\
% \end{cases}
% %, \tau_{IN}^k > \bar{\tau}
% \end{equation}
\begin{equation}
\label{eq:detection probability FE}
\Pr(w^k|s^k,\tau_{IN}^{h,m};\theta_{FE})=
\begin{cases}
\bar{p}_{CD}^{n-1}(s^k,\theta_{FE})\in [0,1] &\text{if } w^k=w_{FE} \\
\bar{p}_{ID}^{n-1}(s^k,\theta_{FE})\in [0,1] &\text{if } w^k=w_{RE}   \\
1-\bar{p}_{CD}^{n-1}(s^k,\theta_{FE})-\bar{p}_{ID}^{n-1}(s^k,\theta_{FE}) &\text{if } w^k=w_{UN} \\
\end{cases}
\end{equation}
and 
\begin{equation}
\label{eq:detection probability RE}
\Pr(w^k|s^k,\tau_{IN}^{h,m};\theta_{RE})=
\begin{cases}
\bar{p}_{CD}^{n-1}(s^k,\theta_{RE})\in [0,1] &\text{if } w^k=w_{RE}  \\
\bar{p}_{ID}^{n-1}(s^k,\theta_{RE})\in [0,1] &\text{if } w^k=w_{FE} \\
1-\bar{p}_{CD}^{n-1}(s^k,\theta_{RE})-\bar{p}_{ID}^{n-1}(s^k,\theta_{RE}) &\text{if } w^k=w_{UN} \\
\end{cases}
\end{equation}
In both \eqref{eq:detection probability FE} and \eqref{eq:detection probability RE}, the first and second cases represent the probability of making correct and incorrect security decisions, respectively. 
The third case represents the probability that the human operator is uncertain about the attack's type and needs more time to inspect.  
%In this work, we assume that the human operator is prudent and does not make incorrect security decisions, i.e., $\Pr(w_{RE}|s^k,\tau_{IN}^{h,m};\theta_{FE})=0$ and $\Pr(w_{FE}|s^k,\tau_{IN}^{h,m};\theta_{RE})=0$. 
A longer inspection time has two impacts: 
\begin{itemize}
    \item Increases the probability of making correct security decisions, i.e., $0=\bar{p}_{CD}^{0}(s^k,\theta^k)\leq \bar{p}_{CD}^{1}(s^k,\theta^k)\leq \cdots \leq \bar{p}_{CD}^{N}(s^k,\theta^k)\leq 1$, for any given $s^k\in \mathcal{S}$ and $\theta^k\in \Theta$. 
    \item Decreases the probability of incorrect security decisions, i.e., $0\leq \bar{p}_{ID}^{N}(s^k,\theta^k)\leq \bar{p}_{ID}^{N-1}(s^k,\theta^k)\leq \cdots \leq \bar{p}_{ID}^{0}(s^k,\theta^k)\leq 1$, for any given $s^k\in \mathcal{S}$ and $\theta^k\in \Theta$. 
\end{itemize}

% a generalised  sigmoid function of the inspection time $\tau_{IN}^k$, i.e., 
% %$f(x)=2/(1+e^{-(x-\bar{\tau})})-1$ where \bar{\tau} from 0 to infinity. see fig
% \begin{equation}
% \label{eq:detection probability FE}
% \Pr(w^k|\theta_{FE},s^k,a_m)=
% \begin{cases}
% 2/(1+e^{-(\tau_{IN}^k-\bar{\tau}(s^k))})-1 &\text{if } w^k=w_{FE} \\
% 0 &\text{if } w^k=w_{RE}  \\
% 2-2/(1+e^{-(\tau_{IN}^k-\bar{\tau}(s^k))}) &\text{if } w^k=w_{UN} \\
% \end{cases}
% %, \tau_{IN}^k > \bar{\tau}
% \end{equation}
% and 
% \begin{equation}
% \label{eq:detection probability RE}
% \Pr(w^k|\theta_{RE},s^k,a_m)=
% \begin{cases}
% 0 &\text{if } w^k=w_{FE} \\
% 2/(1+e^{-(\tau_{IN}^k-\bar{\tau}(s^k))})-1 &\text{if } w^k=w_{RE}  \\
% 2-2/(1+e^{-(\tau_{IN}^k-\bar{\tau}(s^k))}) &\text{if } w^k=w_{UN} \\
% \end{cases}
% %, \tau_{IN}^k > \bar{\tau}
% \end{equation}
% Note that the above two are conditional probability on $\tau_{IN}$. 

%We assume that the operator is not allowed to ignore more than $M$ successive alerts. 
%Thus, the human operator can choose his inspection time from the following finite set $\mathcal{A}:=\{\}$
%We denote the inspection time for the $k$-th alert as $a^k\in [0,\infty)$. 

\section{Semi-Markov Process Model for Performance Evaluation}
\label{sec:SMP}
We assume that the category label of the sequential attacks follows a semi-Markov process based on the attack's type where $\Tr(s^{k+1}|s^k;\theta^k)$ represents the \textit{transition probability} from $s^k\in\mathcal{S}$ to $s^{k+1}\in\mathcal{S}$ when the attack's type is $\theta^k\in \Theta$ at attack stage $k\in \mathbb{Z}^{0+}$. 
As a probability measure, the transition probability satisfies
$\sum_{s^{k+1}\in \mathcal{S}} \Tr(s^{k+1}|s^k;\theta^k)=1, \forall s^{k}\in \mathcal{S}, \forall \theta^k\in \Theta$. 
The inter-arrival time $\tau^k$ is a continuous random variable with a Probability Density Function (PDF) denoted by $z(\cdot|s^k;\theta^k)$.

\subsection{Consolidated State and Consolidated Cost}
Since the inspection is made every $m$ attack stages, we define the \textit{consolidated state} $x^h:=[s^{hm},\cdots,s^{hm+m-1}]\in \mathcal{X}:= \mathcal{S}^m$ that consists of the category labels of $m$ successive alerts at inspection stage $h\in  \mathbb{Z}^{0+}$. 
Analogously, we define the \textit{consolidated type} $\bar{\theta}^{h}:=[\theta^{hm},\cdots,\theta^{hm+m-1}]\in \bar{\Theta}:= \Theta^m $. % as shown in Fig. \ref{fig:newMDP}. 
Then, we denote the transition function of the consolidated state as $\bTr(x^{h+1}|x^h ;\bar{\theta}^{h} )$, which is also Markov as shown below. 
\begin{equation}
\label{eq:bTr}
\resizebox{1 \hsize}{!}{$
\begin{split}
       & \Pr(x^{h+1}|x^h,\cdots,x^1;\bar{\theta}^{h}, \cdots,\bar{\theta}^{1})
       = \frac{ \Pr(x^{h+1},x^h,\cdots,x^1 ;\bar{\theta}^{h}, \cdots,\bar{\theta}^{1}) }{\Pr(x^h,\cdots,x^1 ;\bar{\theta}^{h}, \cdots,\bar{\theta}^{1}) }\\
     & = \frac{\Pr(s^{(h+2)m-1}|s^{(h+2)m-2} ;\theta^{(h+2)m-2}) \Pr(s^{(h+2)m-2}|s^{(h+2)m-3} ;\theta^{(h+2)m-3}) \cdots \Pr(s^1|s^0;\theta^0) }
    { \Pr(s^{(h+1)m-1}|s^{(h+1)m-2};\theta^{(h+1)m-2})\Pr(s^{(h+1)m-2}|s^{(h+1)m-3};\theta^{(h+1)m-3}) \cdots \Pr(s^1|s^0;\theta^0 )  } \\
     & = \Pr(s^{(h+2)m-1}|s^{(h+2)m-2} ;\theta^{(h+2)m-2})\cdots \Pr(s^{(h+1)m-1}|s^{(h+1)m-2} ;\theta^{(h+1)m-2}) 
     \\
     & =     \bTr(x^{h+1}|x^h ;\bar{\theta}^{h} ). 
\end{split}
$}
\end{equation}
The \textit{inspection time} $\tau_{IN}^{h,m}=\sum_{{k}'=hm}^{hm+m-1} \tau^{{k}'}$ at inspection stage $h\in  \mathbb{Z}^{0+}$ is a continuous random variable with support $[0,\infty)$  whose PDF $\bar{z}(\cdot|x^h;\bar{\theta}^{h})$ can be computed based on the PDF $z$. 
Based on $\bar{z}$ and $\Pr(w^{hm} |s^{hm},\tau_{IN}^{h,m};\theta^{hm})$ in \eqref{eq:detection probability FE} and \eqref{eq:detection probability RE}, we can compute the probability of security decision $w^{hm}$ at inspection stage $h\in  \mathbb{Z}^{0+}$ given $x^h$ and $\bar{\theta}^{h}$, i.e., 
\begin{equation}
\label{eq:probability of making decision}
\begin{split}
        &\Pr(w^{hm}|x^h,a_m;\bar{\theta}^{h}) 
    =\int_{0}^{\infty} \Pr(w^{hm}, \tau_{IN}^{h,m} |x^h;\bar{\theta}^{h})  d(\tau_{IN}^{h,m}) \\
    & \quad \quad\quad \quad =\int_{0}^{\infty} 
    \Pr(w^{hm} |s^{hm},\tau_{IN}^{h,m};\theta^{hm}) 
    \bar{z}(\tau_{IN}^{h,m}|x^h,\bar{\theta}^{h}) d(\tau_{IN}^{h,m}). 
\end{split}
\end{equation}
Let ${\Pr}(w^{hm}|x^h,a_m;{\theta}^{hm})$ be the shorthand notation for  $\mathbb{E}_{\theta^l\sim [b_{FE},b_{RE}], l\in \{hm+1,\cdots,hm+m-1\}}  \allowbreak [\Pr(w^{hm}|x^h,a_m;\bar{\theta}^{h}) ], \forall \theta^{hm} \in \Theta$. 
We define the probability of the human operator making correct security decisions at inspection stage $h\in \mathbb{Z}^{0+}$ as 
\begin{equation}
\label{eq:pCDbar}
    \hat{p}_{CD}(x^h,a_m):=b_{FE}{\Pr}(w_{FE}|x^h,a_m;{\theta}_{FE}) + b_{RE}{\Pr}(w_{RE}|x^h,a_m;{\theta}_{RE}), \forall x^h\in \mathcal{X},  
\end{equation}
which leads to the  \textit{consolidated severity level} of IDoS attacks in Definition \ref{def:IDoSprob}. 
% Based on $\hat{p}_{CD}(x^h,a_m)$, We define $1-\hat{p}_{CD}(x^h,a_m)$ as the \textit{consolidated severity level} of IDoS attacks under the consolidated state $x^h\in \mathcal{X}$ and AM strategy $a_m\in \mathcal{A}$. 

\begin{definition}[Consolidated Severity Level]
 \label{def:IDoSprob}
% %Consider the sequence of attacks defined in Section \ref{sec:attack} with AM strategy $a_m\in \mathcal{A}$ and a predefined threshold $p_{TH}\in (0,1)$. 
% %If $\hat{p}_{CD}(x^h,a_m)\leq p_{TH}$ for all $x^h\in \mathcal{X}$, then the sequential attack is called an IDoS attack. 
We define $1-\hat{p}_{CD}(x^h,a_m)$ as the \textit{consolidated severity level} of IDoS attacks under the consolidated state $x^h\in \mathcal{X}$ and AM strategy $a_m\in \mathcal{A}$. 
 \end{definition}

% \begin{definition}[Severity Level of IDoS Attack]
% \label{def:IDoSdegree}
% We define the value of $1-\min_{x^h\in \mathcal{X}} \hat{p}_{CD}(x^h,a_m)$ as the severity level of the IDoS attack.  
% \end{definition}

%sum of independent but different exponential distribution is hard to compute. 
%maybe we should consider discrete case for the journal version so that the summation may be easier. For example, consider a Poisson arrival, at each discrete stage, either attack or not attack. 

We denote $c(w^k,s^k;\theta^k)$ as the operator's cost at attack stage $k\in \mathbb{Z}^{0+}$ when the alert's category label is $s^k\in \mathcal{S}$, the attack's type is $\theta^k\in \Theta$, and the security decision is $w^k\in \mathcal{W}$. %and the AM strategy is $a_m\in\mathcal{A}$. %the reward does not depend on a_m
At attack stages where alerts are inconspicuous, i.e., for all $k\neq hm, h\in  \mathbb{Z}^{0+}$, the security decision is  $w_{UN}$ without manual inspection, which incurs an \textit{uncertainty cost}  $c_{UN}>0$.  
At attack stages of highlighted alerts, i.e., for all $k=hm, h\in  \mathbb{Z}^{0+}$, 
the human operator obtains a reward (resp. cost), denoted by $c_{CD}(s^k;\theta^k)<0$ (resp. $c_{ID}(s^k;\theta^k)>0$), for correct (resp. incorrect) security decisions. 
If the human operator remains uncertain about the attack's type after the inspection time $\tau_{IN}^{h,m}$, i.e., $w^{hm}=w_{UN}$, there is the uncertainty cost $c_{UN}$. 
We define the human operator's \textit{consolidated cost} at inspection stage $h\in  \mathbb{Z}^{0+}$ as 
\begin{equation}
\label{eq:consolidated cost}
        % \bar{c}(x^h, a_m; \bar{\theta}^{h}) := \sum_{w^{hm}\in \mathcal{W}}  \Pr(w^{hm}|x^h,a_m;\bar{\theta}^{h}) \sum_{n=0}^{m-1} c(w^{hm+n},s^{hm+n};\theta^{hm+n}),  
        \bar{c}(x^h, a_m; \bar{\theta}^{h}) :=(m-1) c_{UN} + \sum_{w^{hm}\in \mathcal{W}}  \Pr(w^{hm}|x^h,a_m;\bar{\theta}^{h})  c(w^{hm},s^{hm};\theta^{hm}).   
\end{equation}
%where $\sum_{n=0}^{m-1} c(w^{hm+n},s^{hm+n};\theta^{hm+n})=c(w^{hm},s^{hm};\theta^{hm})+(m-1) c_{UN}$. 
%Since $\Pr(w^{hm}|x^h,a_m;\bar{\theta}^{h})$ in \eqref{eq:probability of making decision} depends on $x^h$ and $a_m$, the consolidated cost $\bar{c}$ also does. 

\subsection{Long-Term Risk Measures for IDoS Attacks}
In this section, we define four long-term risk measures whose relations are shown in Fig. \ref{fig:LongTermCosts}. 
The \textit{Cumulative Cost (CC)} and \textit{Expected Cumulative Cost (ECC)} on the left directly follow from the discounted summation of the consolidated cost $\bar{c}$ in \eqref{eq:consolidated cost}. 
Since CC and ECC depend on the consolidated state $x^h$ and  the consolidated type $\bar{\theta}^h$, it is of high dimension and thus difficult to store and compute. 
By taking an expectation over $s^{hm+1}, \cdots, s^{hm+m-1}\in\mathcal{S}$, we reduce the dimension and obtain the  Aggregated Cumulative Cost (ACC) and Expected Aggregated Cumulative Cost (EACC) on the right of the figure. 
The DP representations for CC (resp. ECC) and ACC (resp. EACC) are generally not equivalent. 
We identify the condition under which two DP representations are equivalent in Section \ref{sec:specialcase}. 
The two risk learning schemes are introduced in Section \ref{sec:TD learning}. 
Since the consolidated risk learning is based on ECC, it has to wait for the realization of the consolidated state $x^h:=[s^{hm},\cdots,s^{hm+m-1}]$ to evaluate the inspection performance. 
On the contrary, the EACC-based aggregated risk learning just needs $s^{hm}$ to evaluate the inspection performance, which reduces the dimension of the samples and enables evaluations with no delay.

\begin{figure}[h]
\centering
\includegraphics[width=.8 \textwidth]{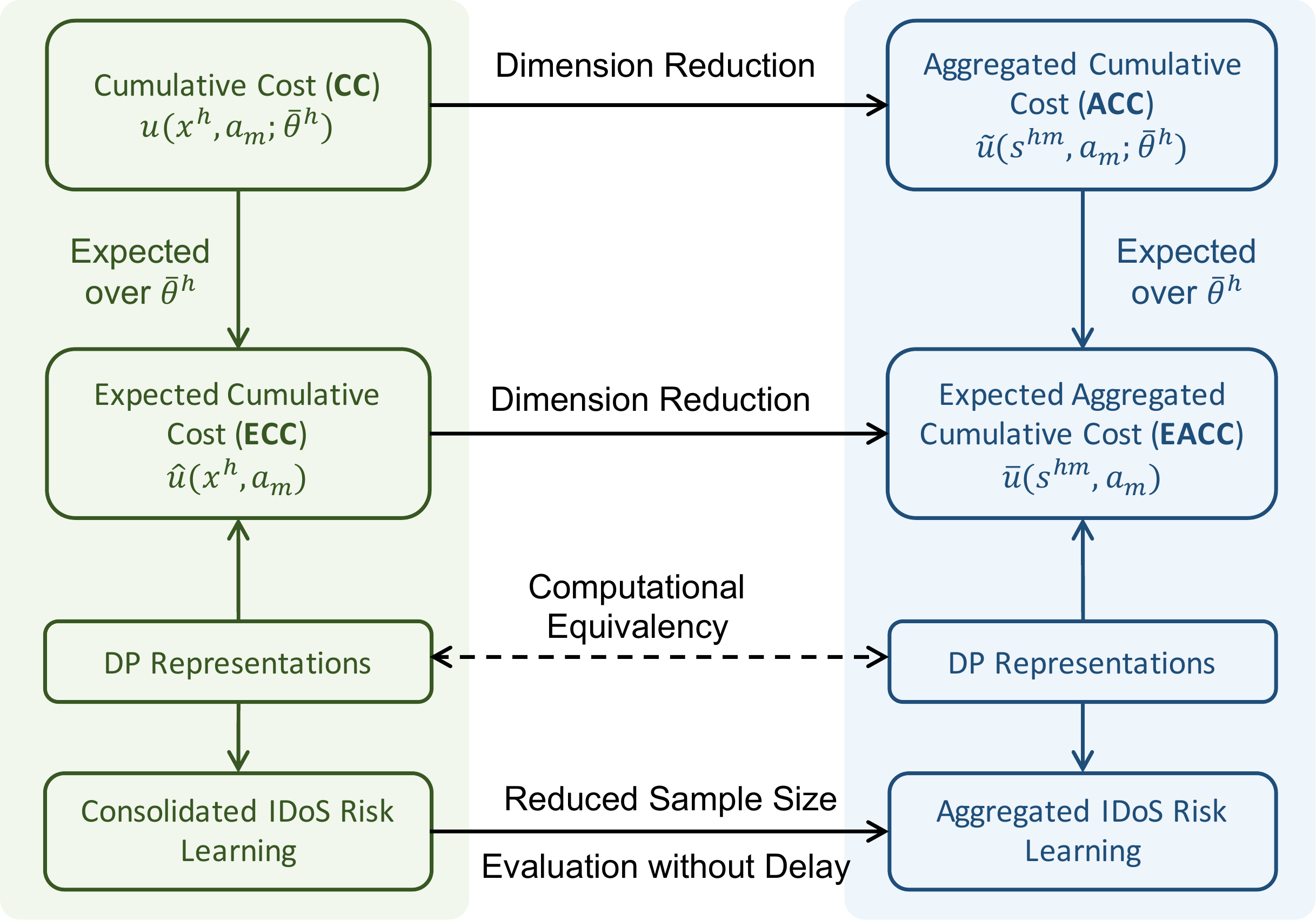}
\caption{ 
Relations among four long-term risk measures, their DP representations, and two risk learning schemes. 
%The ones associated with $x^h$ and $s^{hm}$ are shown in green and blue, respectively.
}
\label{fig:LongTermCosts}
\end{figure}  

\subsubsection{Cumulative Cost and Expected Cumulative Cost}

With discounted factor $\gamma\in (0,1)$, we define the \textit{Cumulative Cost (CC)} under $x^{h_0}$, $\bar{\theta}^{h_0}$, and action $a_m$ as 
$
u(x^{h_0},a_m;\bar{\theta}^{h_0}) := \mathbb{E} [ \sum_{h=h_0}^{\infty}  (\gamma)^h \cdot \bar{c}(x^{h},a_m;\bar{\theta}^{h}) ], 
$
where the expectation is taken over $x^{h_0+n}$ and $\theta^{h_0m+n}$ for all $n\in \{1, 2, \cdots, \infty\}$. 
By Dynamic Programming (DP), we represent $u$ in the following iterative form, i.e., for all $ x^h\in\mathcal{X}$, $\bar{\theta}^{h}\in \bar{\Theta}$, and $ h\in  \mathbb{Z}^{0+}$, 
\begin{equation}
\label{eq:DP}
 u(x^{h},a_m;\bar{\theta}^{h}) = \bar{c}(x^{h},a_m;\bar{\theta}^{h}) + \gamma \sum_{x^{h+1}\in \mathcal{X}} \bTr(x^{h+1}|x^h;\bar{\theta}^{h}) 
     \mathbb{E}_{\bar{\theta}^{h+1}} [
     u(x^{h+1},a_m;\bar{\theta}^{h+1}) ]. 
\end{equation}
%For finite stages, we can compute \eqref{eq:DP} backwardly from $h=H$ to $h=0$. 
%For infinite stages, i.e., $K=\infty$, then $ u= u$ and 
Denote $u^{l}(x^{h},a_m;\bar{\theta}^{h}), l\in \mathbb{Z}^{0+}$, as the estimated value of $u(x^{h},a_m;\bar{\theta}^{h})$ at the $l$-th iteration, we can compute \eqref{eq:DP} by the following value iteration algorithm in Algorithm \ref{algorithm:VI}. 
\begin{algorithm}[h]
\SetAlgoLined
 \textbf{Initialize} a stopping threshold $\epsilon>0$, $l=0$,  and  $u^0(x^{h},a_m;\bar{\theta}^{h})=0, \forall x^h\in \mathcal{X},\bar{\theta}^{h}\in \bar{\Theta}$  \;
   %(\textit{Computation Phase}) 
   \While{ $\max_{ x^h\in \mathcal{X},\bar{\theta}^{h}\in \bar{\Theta} } [u^{l+1}(x^{h},a_m;\bar{\theta}^{h})-u^{l}(x^{h},a_m;\bar{\theta}^{h})]\geq \epsilon$
   }
   {
   \For{$x^h\in \mathcal{X}$ and $\bar{\theta}^{h}\in \bar{\Theta}$}{
Update estimated value $ u^{l+1}(x^{h},a_m;\bar{\theta}^{h}) = \bar{c}(x^{h},a_m;\bar{\theta}^{h}) + \gamma \sum_{x^{h+1}\in \mathcal{X}} \bTr(x^{h+1}|x^h;\bar{\theta}^{h}) 
     \mathbb{E}_{\bar{\theta}^{h+1}} [
     u^l(x^{h+1},a_m;\bar{\theta}^{h+1}) ]$\; 
  }
    
      $l \leftarrow l+1$ \;
  }
  \textbf{Return} $u^{l+1}(x^{h},a_m;\bar{\theta}^{h})$ \; 
 \caption{Value Iteration 
 \label{algorithm:VI}}
\end{algorithm}
It can be shown that $u^{\infty}(x^{h},a_m;\bar{\theta}^{h})$ converges to $u(x^{h},a_m;\bar{\theta}^{h})$ and the following lemma holds \cite{DP}. 
\begin{lemma}[Monotonicity Lemma]
\label{lemma:Monotonicity}
Let $
u'(x^{h_0},a_m;\bar{\theta}^{h_0}) := \mathbb{E} [ \sum_{h=h_0}^{\infty}  (\gamma)^h \cdot \bar{c}'(x^{h},a_m;\bar{\theta}^{h}) ]
$. 
% Consider another set of consolidated cost $\bar{c}'$ and CC $u'$ that statisfy DP
If $\bar{c}(x^{h},a_m;\bar{\theta}^{h})>\bar{c}'(x^{h},a_m;\bar{\theta}^{h}), \forall x^h\in \mathcal{X},\bar{\theta}^{h}\in \bar{\Theta}$, then $ u(x^{h},a_m;\bar{\theta}^{h})>u'(x^{h},a_m;\bar{\theta}^{h})$ for all $x^h\in \mathcal{X},\bar{\theta}^{h}\in \bar{\Theta}$. 
\end{lemma}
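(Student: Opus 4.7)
The plan is to prove the Monotonicity Lemma by working directly with the definition of the Cumulative Cost as an expected infinite discounted sum, exploiting the fact that the state and type spaces are finite so that the pointwise strict inequality on $\bar{c}$ implies a uniform strict gap.

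First I would observe that since $\mathcal{S}$ and $\Theta$ are finite, the consolidated state space $\mathcal{X}=\mathcal{S}^m$ and the consolidated type space $\bar{\Theta}=\Theta^m$ are also finite. Hence the quantity
\[
\delta := \min_{x\in\mathcal{X},\,\bar{\theta}\in\bar{\Theta}}\bigl[\bar{c}(x,a_m;\bar{\theta}) - \bar{c}'(x,a_m;\bar{\theta})\bigr]
\]
is attained and, by hypothesis, satisfies $\delta > 0$. This upgrades the pointwise strict inequality to a uniform lower bound, which is the key for preserving strictness through the expectation and the infinite sum.

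Next, using linearity of expectation and the definitions of $u$ and $u'$, I would write
\[
u(x^{h_0},a_m;\bar{\theta}^{h_0}) - u'(x^{h_0},a_m;\bar{\theta}^{h_0}) = \mathbb{E}\!\left[\sum_{h=h_0}^{\infty} \gamma^h \bigl(\bar{c}(x^{h},a_m;\bar{\theta}^{h}) - \bar{c}'(x^{h},a_m;\bar{\theta}^{h})\bigr)\right],
\]
where the expectation is taken over the trajectories $(x^{h},\bar{\theta}^{h})_{h\geq h_0+1}$ induced by $\bTr$ and the type distribution $[b_{FE},b_{RE}]$. Applying the uniform lower bound pathwise and summing the resulting geometric series gives
\[
u(x^{h_0},a_m;\bar{\theta}^{h_0}) - u'(x^{h_0},a_m;\bar{\theta}^{h_0}) \geq \delta \sum_{h=h_0}^{\infty}\gamma^h = \frac{\delta\,\gamma^{h_0}}{1-\gamma} > 0,
\]
since $\gamma\in(0,1)$ and $\delta>0$. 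This yields the desired strict inequality for every initial $(x^{h_0},\bar{\theta}^{h_0})$.

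I do not anticipate a serious obstacle: both $u$ and $u'$ are finite-valued because $\bar{c}$ and $\bar{c}'$ are bounded on the finite spaces and $\gamma<1$, which legitimizes the interchange of expectation and summation via Fubini (or dominated convergence). If one preferred a DP-style argument, an alternative would be induction on the value-iteration iterates $u^{l}$ and $u'^{l}$: starting from $u^{0}=u'^{0}=0$, the recursion in Algorithm~\ref{algorithm:VI} together with $\bar{c}>\bar{c}'$ yields $u^{l+1}>u'^{l+1}$ at every iterate, and passing to the limit (which preserves $\geq$) combined with the uniform-gap argument above confirms strictness. Either route delivers the lemma.
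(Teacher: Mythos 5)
Your proof is correct. Note that the paper does not supply its own proof of this lemma---it defers to the standard dynamic-programming literature via the citation---and your direct argument from the definition (the uniform gap $\delta>0$ over the finite set $\mathcal{X}\times\bar{\Theta}$, followed by linearity of expectation and summation of the geometric series, with Fubini justified by boundedness of the costs and $\gamma\in(0,1)$) is precisely the standard argument that citation points to; the value-iteration induction you sketch as an alternative is the other standard route. One small simplification: strictness already follows from the $h=h_0$ term alone, which is deterministic given the fixed initial pair $(x^{h_0},\bar{\theta}^{h_0})$ and strictly positive, while every later term has nonnegative expectation, so the uniform bound $\delta$ is convenient but not essential.
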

We define the \textit{Expected Cumulative Cost (ECC)} as 
$
    \hat{u}(x^{h},a_m):= \mathbb{E}_{\bar{\theta}^{h}} [u(x^{h},a_m;\bar{\theta}^{h})] , \forall x^{h}\in \mathcal{X}
$, and write the DP representation of $\hat{u}$ in \eqref{eq:ECC} by taking expectation over $\bar{\theta}^h$ in \eqref{eq:DP}. 
%Taking expectation over $\bar{\theta}^h$, we can rewrite  \eqref{eq:DP} concerning $\hat{u}$ as 
\begin{equation}
\label{eq:ECC}
    \hat{u}(x^h,a_m)= \mathbb{E}_{ \bar{\theta}^h } [\bar{c}(x^{h},a_m;\bar{\theta}^{h})] + \gamma  \sum_{x^{h+1}\in \mathcal{X}} \mathbb{E}_{ \bar{\theta}^h } [\bTr(x^{h+1}|x^h;\bar{\theta}^{h}) ] 
    \hat{u}(x^{h+1},a_m).  
\end{equation}

\subsubsection{Aggregated Cumulative Cost and Expected Aggregated Cumulative Cost}

We define the \textit{Aggregated Cumulative Cost (ACC)} as  
\begin{equation}
\label{def:ACC}
\begin{split}
        \Tilde{u}(s^{hm},a_m;\bar{\theta}^{h}):=\sum_{s^{hm+1}, \cdots, s^{hm+m-1}\in\mathcal{S}} \bigg[ \Pr(s^{hm+1}, \cdots, s^{hm+m-1}|s^{hm};\bar{\theta}^{h} ) \\
    \cdot u([s^{hm}, \cdots, s^{hm+m-1}], a_m;  \bar{\theta}^{h} ) \bigg], 
\end{split}
\end{equation}
and the \textit{Expected Aggregated Cumulative Cost (EACC)}  as
\begin{equation}
\label{eq:EACC}
    \bar{u}(s^{hm},a_m):= \mathbb{E}_{\theta^l\sim [b_{FE},b_{RE}], l\in \{hm, \cdots,hm+m-1\}} [\Tilde{u}(s^{hm},a_m;\bar{\theta}^{h})] , \forall s^{hm}\in \mathcal{S}. 
\end{equation}

Both ECC $\hat{u}(x^0,a_m)$ and EACC $\bar{u}^0(s^0,a_m)$ evaluate the long-term performance of the AM strategy $a_m\in \mathcal{A}$ on average as defined in Definition \ref{def:IDoS risk}.
However, EACC depends on $s^{hm}$ but not on $s^{hm+1}, \cdots, s^{hm+m-1}$. 

\begin{definition}[Consolidated and Aggregated IDoS risks]
\label{def:IDoS risk}
We define ECC $\hat{u}(x^h,a_m)$ (resp. EACC $\bar{u}(s^{hm},a_m)$) as the consolidated (resp. aggregated) risk of the IDoS attack  under $x^h\in \mathcal{X}$ (resp. $s^{hm}\in \mathcal{S}$) and attention strategy $a_m\in \mathcal{A}$.  
\end{definition}

% We define the Aggregated IDoS attack based on EACC in Definition \ref{def:AggregatedIDoS}. 
% \begin{definition}[Aggregated IDoS Attack]
% \label{def:AggregatedIDoS}
% Consider the sequence of $K$ attacks defined in Section \ref{sec:IDoS model} with $s^0\in \mathcal{S}$ and AM strategy $a_m$. The attack is said to be an Aggregated IDoS attack of level-${u}_{TH}$ if the EACC  $\bar{u}(s^0,a_m)\leq {u}_{TH}$. 
% \end{definition}

% Extension to Bayesian RL means that the distribution of $\theta$ is updated based on Bayesian rules as we observe more and more realization of $\theta$ as time evolve. 
% In classic BCL, only the transition probability depends on  $\theta$. 
% Turn the problem into POMDP. 

\subsection{Inter-Arrival Time with Independent PDF} 
\label{sec:specialcase}
In Section \ref{sec:specialcase}, we consider the special case where PDF $z$ is independent of $s^k$ and $\theta^k$, which reduces the dependency of $\hat{p}_{CD}$ and $\bar{c}$ from $x^h$ to $s^{hm}$ as shown in Lemma \ref{lemma:independent}. 
Moreover, we can obtain DP representations for ACC $\Tilde{u}$ and EACC $\bar{u}$ as shown in Theorem \ref{thm:aggregation state}.  
Value iteration in Algorithm \ref{algorithm:VI} can be revised accordingly to solve these two DP representations.  
%Even if PDF $z$ is independent of $s^k$ and $\theta^k$, CC $u$ still depends on $x^h$ and $\bar{\theta}^{h}$ due to its dependence on $\bTr$. 

\begin{lemma}
\label{lemma:independent}
If PDF $z$ is independent of $s^k$ and $\theta^k$, then  $\hat{p}_{CD}(x^h,a_m)$ in \eqref{eq:pCDbar} can be rewritten as $\hat{p}_{CD}(s^{hm},a_m)$ and the consolidated cost  $\bar{c}(x^{h},a_m ;\bar{\theta}^{h})$  in \eqref{eq:consolidated cost} can be rewritten as $\bar{c}(s^{hm},a_m ;\theta^{hm})$ without loss of generality. 
\end{lemma}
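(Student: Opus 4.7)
The plan is to propagate the independence of $z$ through the chain of definitions that build up $\hat{p}_{CD}$ and $\bar{c}$, observing at each step which variables the quantities actually depend on. The structure is: independence of $z$ $\Rightarrow$ independence of $\bar{z}$ $\Rightarrow$ independence of $\Pr(w^{hm}\mid x^h, a_m;\bar{\theta}^h)$ from everything except $s^{hm}$ and $\theta^{hm}$ $\Rightarrow$ the claimed reductions.

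First, since $\tau_{IN}^{h,m} = \sum_{k'=hm}^{hm+m-1}\tau^{k'}$ is a sum of $m$ independent inter-arrival times, its PDF $\bar{z}(\cdot \mid x^h;\bar{\theta}^h)$ is the $m$-fold convolution of the marginal PDFs $z(\cdot\mid s^{k'};\theta^{k'})$ for $k'=hm,\dots,hm+m-1$. Under the hypothesis that $z$ does not depend on $s^{k'}$ or $\theta^{k'}$, each marginal is the same function $z(\cdot)$, so $\bar{z}(\tau_{IN}^{h,m}\mid x^h;\bar{\theta}^h)$ reduces to a single function $\bar{z}(\tau_{IN}^{h,m})$ that is independent of both $x^h$ and $\bar{\theta}^h$.

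Next I substitute this into equation \eqref{eq:probability of making decision}. The remaining factor in the integrand, $\Pr(w^{hm}\mid s^{hm}, \tau_{IN}^{h,m};\theta^{hm})$, already depends only on $s^{hm}$ and $\theta^{hm}$ (together with the AM strategy $a_m$ through $\tau_{IN}^{h,m}$); therefore the integral itself depends only on $s^{hm}$, $\theta^{hm}$, and $a_m$. Hence $\Pr(w^{hm}\mid x^h, a_m;\bar{\theta}^h)$ coincides with $\Pr(w^{hm}\mid s^{hm}, a_m;\theta^{hm})$, and taking the expectation over $\theta^l$ for $l\in\{hm+1,\dots,hm+m-1\}$ (as in the shorthand $\Pr(w^{hm}\mid x^h,a_m;\theta^{hm})$) leaves the expression unchanged since it does not depend on those variables. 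Plugging this into \eqref{eq:pCDbar} gives $\hat{p}_{CD}(x^h,a_m) = b_{FE}\Pr(w_{FE}\mid s^{hm},a_m;\theta_{FE}) + b_{RE}\Pr(w_{RE}\mid s^{hm},a_m;\theta_{RE})$, which depends on $x^h$ only through $s^{hm}$ and may be written $\hat{p}_{CD}(s^{hm},a_m)$.

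For the consolidated cost, I substitute the reduced decision probability into \eqref{eq:consolidated cost}. The summand $\Pr(w^{hm}\mid x^h,a_m;\bar{\theta}^h)\,c(w^{hm},s^{hm};\theta^{hm})$ depends on $(x^h,\bar{\theta}^h)$ only through $(s^{hm},\theta^{hm})$, and the term $(m-1)c_{UN}$ is constant in both. Thus $\bar{c}(x^h,a_m;\bar{\theta}^h)$ reduces to a function of $(s^{hm},a_m,\theta^{hm})$, which I denote $\bar{c}(s^{hm},a_m;\theta^{hm})$. There is no serious obstacle here; the main care required is bookkeeping to ensure that the hypothesis really does eliminate every appearance of $s^{hm+1},\dots,s^{hm+m-1}$ and $\theta^{hm+1},\dots,\theta^{hm+m-1}$ in $\bar{z}$, and that the cost function $c(w^{hm},s^{hm};\theta^{hm})$ introduced in Section~\ref{sec:inspectionAM} already depends only on the inspected stage's label and type.
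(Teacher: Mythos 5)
Your proposal is correct and follows essentially the same route as the paper's own proof: independence of $z$ implies independence of $\bar{z}$, hence the decision probability in \eqref{eq:probability of making decision} depends only on $s^{hm}$ and $\theta^{hm}$, and the reductions of $\hat{p}_{CD}$ and $\bar{c}$ follow. Your version is slightly more explicit (spelling out the $m$-fold convolution and the constancy of the $(m-1)c_{UN}$ term), but there is no substantive difference.
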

\begin{proof}
If $z$ is independent of $s^k,\theta^k$, then $\bar{z}$ is independent of $x^h,\bar{\theta}^{h}$, and $\Pr(w^{hm}|x^h,a_m;\bar{\theta}^{h})$ in \eqref{eq:probability of making decision} only depends on $s^{hm}$ and $\theta^{hm}$. 
Thus, $\hat{p}_{CD}$ becomes a function of $s^{hm},a_m$, and the consolidated cost $\bar{c}$ in \eqref{eq:consolidated cost} becomes a function of $s^{hm}$, $\theta^{hm}$, and $a_m$. 
\qed
\end{proof}

\begin{theorem}
\label{thm:aggregation state}
If PDF $z$ is independent of $s^k$ and $\theta^k$, then we have the following DP representation in \eqref{eq:DPspecial} for the ACC
\begin{equation}
\label{eq:DPspecial}
\begin{split}
     & \Tilde{u}(s^{hm},a_m;\bar{\theta}^{h}) = \bar{c}(s^{hm},a_m;\theta^{hm}) \\
   &\quad\quad + \gamma \sum_{s^{(h+1)m}\in \mathcal{X}} \Pr(s^{(h+1)m}|s^{hm};\bar{\theta}^{h}) \mathbb{E}_{\bar{\theta}^{h+1}} [\Tilde{u}(s^{(h+1)m},a_m;\bar{\theta}^{h+1})], 
\end{split}
\end{equation}
and the following DP representation in \eqref{eq:DPspecialEACR} for the EACC
\begin{equation}
\label{eq:DPspecialEACR}
\begin{split}
     & \bar{u}(s^{hm},a_m) =\mathbb{E}_{\theta^{hm}} [\bar{c}(s^{hm},a_m;\theta^{hm})] \\
   &\quad\quad + \gamma \sum_{s^{(h+1)m}\in \mathcal{X}} \mathbb{E}_{\bar{\theta}^{h}} [\Pr(s^{(h+1)m}|s^{hm};\bar{\theta}^{h})] \cdot \bar{u}(s^{(h+1)m},a_m), 
\end{split}
\end{equation}
where 
\begin{equation}
\label{eq:expected transition probability}
    \mathbb{E}_{\bar{\theta}^{h}} [\Pr(s^{(h+1)m}|s^{hm};\bar{\theta}^{h})]=
    \sum_{s^{hm+1},\cdots,s^{hm+m-1}\in\mathcal{S}} \prod_{l=hm}^{(h+1)m} \mathbb{E}_{\theta^l\sim [b_{FE},b_{RE}]} [\Tr(s^{l+1}|s^l;\theta^l)]. 
\end{equation}

\end{theorem}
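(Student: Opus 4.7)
The plan is to derive \eqref{eq:DPspecial} by substituting the already-proven DP representation \eqref{eq:DP} of the CC $u$ into the definition \eqref{def:ACC} of the ACC $\tilde{u}$, using Lemma \ref{lemma:independent} to collapse the $s^{hm+1},\ldots,s^{hm+m-1}$ variables on the instantaneous side, and using the Markov factorization established in \eqref{eq:bTr} to do so on the future-value side. Equation \eqref{eq:DPspecialEACR} will then follow directly by taking $\mathbb{E}_{\bar{\theta}^h}[\cdot]$ on both sides of \eqref{eq:DPspecial} and exploiting the independence of $\bar{\theta}^h$ from $\bar{\theta}^{h+1}$ and from $\tilde{u}$ at the next stage.

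First I would handle the instantaneous term. Plugging \eqref{eq:DP} into \eqref{def:ACC} produces a sum of $\Pr(s^{hm+1},\ldots,s^{hm+m-1}|s^{hm};\bar{\theta}^h)\,\bar{c}(x^h,a_m;\bar{\theta}^h)$ over $s^{hm+1},\ldots,s^{hm+m-1}$. By Lemma \ref{lemma:independent}, $\bar{c}(x^h,a_m;\bar{\theta}^h)=\bar{c}(s^{hm},a_m;\theta^{hm})$, so the summand no longer depends on the dummy indices and the sum of the conditional probabilities equals one, leaving exactly $\bar{c}(s^{hm},a_m;\theta^{hm})$. Next I would tackle the future term. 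Expanding $x^{h+1}=[s^{(h+1)m},\ldots,s^{(h+2)m-1}]$ and using the Markov identity from \eqref{eq:bTr}, I would write
\begin{equation*}
\Pr(s^{hm+1},\ldots,s^{hm+m-1}|s^{hm};\bar{\theta}^h)\,\bTr(x^{h+1}|x^h;\bar{\theta}^h) = \Pr(s^{hm+1},\ldots,s^{(h+2)m-1}|s^{hm};\bar{\theta}^h,\bar{\theta}^{h+1}),
\end{equation*}
and then marginalize over $s^{hm+1},\ldots,s^{hm+m-1}$ to factor the result as $\Pr(s^{(h+1)m}|s^{hm};\bar{\theta}^h)\,\Pr(s^{(h+1)m+1},\ldots,s^{(h+2)m-1}|s^{(h+1)m};\bar{\theta}^{h+1})$. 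Interchanging $\mathbb{E}_{\bar{\theta}^{h+1}}[\cdot]$ with the remaining sum over $s^{(h+1)m+1},\ldots,s^{(h+2)m-1}$ -- valid because $\bar{\theta}^{h+1}$ is independent of $\bar{\theta}^h$ and the outer summation index -- collapses that residual sum into $\tilde{u}(s^{(h+1)m},a_m;\bar{\theta}^{h+1})$ by \eqref{def:ACC}, yielding \eqref{eq:DPspecial}.

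For \eqref{eq:DPspecialEACR} I would simply take $\mathbb{E}_{\bar{\theta}^h}[\cdot]$ of \eqref{eq:DPspecial}. The instantaneous term becomes $\mathbb{E}_{\theta^{hm}}[\bar{c}(s^{hm},a_m;\theta^{hm})]$ because $\bar{c}$ depends on $\bar{\theta}^h$ only through $\theta^{hm}$. In the continuation term, $\mathbb{E}_{\bar{\theta}^{h+1}}[\tilde{u}(s^{(h+1)m},a_m;\bar{\theta}^{h+1})]=\bar{u}(s^{(h+1)m},a_m)$ by \eqref{eq:EACC} is independent of $\bar{\theta}^h$, so the averaging over $\bar{\theta}^h$ acts only on the Markov factor, producing $\mathbb{E}_{\bar{\theta}^h}[\Pr(s^{(h+1)m}|s^{hm};\bar{\theta}^h)]$. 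Finally, identity \eqref{eq:expected transition probability} comes from expanding $\Pr(s^{(h+1)m}|s^{hm};\bar{\theta}^h)$ as a sum over intermediate states of a product of $\Tr(s^{l+1}|s^l;\theta^l)$, and pushing the expectation inside the product, which is justified by the fact that the components $\theta^{hm},\ldots,\theta^{(h+1)m-1}$ of $\bar{\theta}^h$ are i.i.d.\ with distribution $[b_{FE},b_{RE}]$.

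The main obstacle is the bookkeeping in the second step: one must align the indices so that the product $\Pr(s^{hm+1},\ldots,s^{hm+m-1}|s^{hm};\bar{\theta}^h)\bTr(x^{h+1}|x^h;\bar{\theta}^h)$ factors cleanly into a $\bar{\theta}^h$-measurable marginal $\Pr(s^{(h+1)m}|s^{hm};\bar{\theta}^h)$ and a $\bar{\theta}^{h+1}$-measurable block, so that the block-conditional sum is recognizable as $\tilde{u}(s^{(h+1)m},a_m;\bar{\theta}^{h+1})$. Once this splitting is clear, everything else amounts to routine marginalization and use of the independence structure of the $\theta^l$'s.
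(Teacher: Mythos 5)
Your proposal is correct and follows essentially the same route as the paper's proof: substitute the CC recursion \eqref{eq:DP} into the ACC definition \eqref{def:ACC}, collapse the instantaneous term via Lemma \ref{lemma:independent}, use the Markov factorization of $\bTr$ to marginalize the intermediate labels and recognize $\Tilde{u}(s^{(h+1)m},a_m;\bar{\theta}^{h+1})$, and then take $\mathbb{E}_{\bar{\theta}^h}$ together with the i.i.d.\ structure of the $\theta^l$'s to obtain \eqref{eq:DPspecialEACR} and \eqref{eq:expected transition probability}. The only difference is cosmetic bookkeeping (you merge the two conditional probabilities into one joint before re-factoring, whereas the paper manipulates the nested sums directly), so no further comparison is needed.
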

\begin{proof}
First, for all $\bar{\theta}^{h}\in \bar{\Theta}$, we have 
\begin{equation*}
    \sum_{s^{hm+1}, \cdots, s^{hm+m-1}\in\mathcal{S}}  \Pr(s^{hm+1}, \cdots, s^{hm+m-1}|s^{hm};\bar{\theta}^{h} ) \cdot \bar{c}(s^{hm},a_m ;\theta^{hm}) \equiv \bar{c}(s^{hm},a_m ;\theta^{hm}). 
\end{equation*}
Second, since 
\begin{equation*}
    \begin{split}
       & \bTr(x^{h+1}|x^h;\bar{\theta}^{h})=\Pr(s^{(h+1)m},\cdots, s^{(h+2)m-1}|s^{hm+m-1};\bar{\theta}^{h} ) \\
    & \quad \quad =\Pr(s^{(h+1)m+1},\cdots, s^{(h+2)m-1}|s^{(h+1)m};\bar{\theta}^{h} )\Tr(s^{(h+1)m}|s^{hm+m-1};{\theta}^{hm+m-1})
    \end{split}
\end{equation*}
 as shown in \eqref{eq:bTr}, we have 
\begin{equation*}
   \begin{split}
    &    \sum_{s^{hm+1}, \cdots, s^{hm+m-1}\in\mathcal{S}}  \Pr(s^{hm+1}, \cdots, s^{hm+m-1}|s^{hm} ;\bar{\theta}^{h} ) \\ 
       & \quad\quad \quad\quad\quad  \cdot \sum_{x^{h+1}\in \mathcal{X}} \bTr(x^{h+1}|x^h;\bar{\theta}^{h}) \mathbb{E}_{\bar{\theta}^{h+1}} [u(x^{h+1},a_m;\bar{\theta}^{h+1})] \\
       & =  \sum_{s^{hm+1}, \cdots, s^{hm+m-1}\in\mathcal{S}}  \Pr(s^{hm+1}, \cdots, s^{hm+m-1}|s^{hm};\bar{\theta}^{h} ) \\
       & \quad\quad \quad\quad\quad \cdot \sum_{s^{(h+1)m}\in \mathcal{S}}  \Tr(s^{(h+1)m}|s^{hm+m-1};{\theta}^{hm+m-1}) 
          \cdot \mathbb{E}_{\bar{\theta}^{h+1}} [\Tilde{u}(s^{(h+1)m},a_m;\bar{\theta}^{h+1})]. 
   \end{split}
\end{equation*}
Based on the Markov property, we have
\begin{equation*}
\begin{split}
        & \sum_{s^{hm+1}, \cdots, s^{hm+m-1}\in\mathcal{S}}  \Pr(s^{hm+1}, \cdots, s^{hm+m-1}|s^{hm} ;\bar{\theta}^{h}) \Tr(s^{(h+1)m}|s^{hm+m-1};{\theta}^{hm+m-1}) \\
     & \quad\quad\quad \quad\quad\quad = \Pr(s^{(h+1)m}|s^{hm};\bar{\theta}^{h}). 
\end{split}
\end{equation*}
Therefore, we obtain \eqref{eq:DPspecial} by plugging \eqref{eq:DP} into the definition of ACC in \eqref{def:ACC}. 
We obtain \eqref{eq:DPspecialEACR} by taking expectation over $\bar{\theta}^h$ and using the definition of EACC in \eqref{eq:EACC}. 
Based on \eqref{eq:expected transition probability}, we can compute $\mathbb{E}_{\bar{\theta}^{h}} [\Pr(s^{(h+1)m}|s^{hm};\bar{\theta}^{h})]$ directly from  the transition probability $\Tr$ by the forward Kolmogorov equation. 
\qed
\end{proof}

\begin{remark}[\textbf{Computational Equivalency}] 
To compute $\Tilde{u}$, we generally need to first compute $u$ via $\eqref{eq:DP}$ and then take expectation over $s^{hm+1}, \cdots,s^{hm+m-1}$. 
This computation is of high temporal and spatial complexity as $u$ depends on $x^h$. 
However, for the special case where $z$ is independent of $s^k$ and $\theta^k$, we can compute $\Tilde{u}$ directly based on \eqref{eq:DPspecial} and reduce the computational complexity. 
Thus, Theorem \ref{thm:aggregation state} establishes a computational equivalency between the two DP representations in \eqref{eq:DP} and \eqref{eq:DPspecial}, which contributes to a lightweight computation scheme. 
Analogously, we also establish a computational equivalency between the two DP representations in \eqref{eq:ECC} and \eqref{eq:DPspecialEACR} by taking expectations of \eqref{eq:DP} and \eqref{eq:DPspecial} with respect to $\bar{\theta}^h$. 
\end{remark}

\subsection{Data-Driven Assessment}
\label{sec:TD learning}
In practice, we do not know the parameters of the SMP model, including the transition probability $\Tr$, the PDF $z$, the threshold set $\mathcal{N}(s^k,\theta^k)$, and the set of probability of making correct (resp. incorrect) decisions $\bar{p}_{CD}^n$ (resp. $\bar{p}_{ID}^n$), $n\in \{0,1\cdots,N\}$.  
Therefore, we use Temporal-Difference (TD) learning \cite{DP} to evaluate the performance of the AM strategy $a_m\in \mathcal{A}$ based on the inspection results in real-time. 
%In Section \ref{sec:TD learning}, we consider $K=\infty$  (i.e., $H=\infty$) and omit the superscript of the stage index. 
\subsubsection{Consolidated IDoS Risk Learning}
Letting $v^h(x^h,a_m)$ be the estimated value of $\hat{u}(x^h,a_m)$ at the inspection stage $h\in  \mathbb{Z}^{0+}$, we have the following recursive update in real-time as shown in \eqref{eq:TDgeneral}. 
\begin{equation}
\label{eq:TDgeneral}
    v^{h+1}(\hat{x}^h,a_m)=(1-\alpha^h(\hat{x}^h) ) v^{h}(\hat{x}^h,a_m)+ \alpha^h(\hat{x}^h) (\hat{c}^h+\gamma v^{h}(\hat{x}^{h+1},a_m) ), 
\end{equation}
where $\hat{x}^h$ (resp. $\hat{x}^{h+1}$) is the observed state value at the current inspection stage $h$ (resp. the next inspection stage $h+1$), $\alpha^h(\hat{x}^h)\in (0,1)$ is the learning rate, and $\hat{c}^h$ is the observed cost at stage $h\in  \mathbb{Z}^{0+}$. 
To guarantee that $v^{\infty}$ convergences to $\hat{u}$, we require $\sum_{h=0}^{\infty}\alpha^h({x}^h)=\infty$ and $\sum_{h=0}^{\infty} (\alpha^h({x}^h))^2<\infty$ for all $x^h\in\mathcal{X}$. 
%Based on the value of $v^h(x^h,a_m)$ and a pre-defined threshold $u_{TH}$, we can identify the IDoS attack based on Definition \ref{def:IDoS}. 

\subsubsection{Aggregated IDoS Risk Learning}
For the special case where PDF $z$ is independent of $s^k$ and $\theta^k$, we can use TD learning to directly estimate EACC $\bar{u}(s^{hm},a_m)$ based on \eqref{eq:DPspecialEACR}. 
 Letting $\bar{v}^h(x^h,a_m)$ be the estimated value of $\bar{u}(s^{hm},a_m)$ at the inspection stage $h\in  \mathbb{Z}^{0+}$, we have the following recursive update in real-time as shown in \eqref{eq:TDspecial}. 
\begin{equation}
\label{eq:TDspecial}
    \bar{v}^{h+1}(\hat{s}^{hm},a_m)=(1-\bar{\alpha}^h(\hat{s}^{hm}) ) v^{h}(\hat{s}^{hm},a_m)+ \bar{\alpha}^h(\hat{s}^{hm}) (\hat{c}^h+\gamma \bar{v}^{h}(\hat{s}^{(h+1)m},a_m) ), 
\end{equation}
where $\hat{s}^{hm}$ (resp. $\hat{s}^{(h+1)m}$) is the observed state value at the current inspection stage $h$ (resp. the next inspection stage $h+1$), $\bar{\alpha}^h(\hat{s}^{hm})\in (0,1)$ is the learning rate, and $\hat{c}^h$ is the observed cost at stage $h\in  \mathbb{Z}^{0+}$. 
To guarantee that $\bar{v}^{\infty}$ convergences to $\bar{u}$, we require $\sum_{h=0}^{\infty}\bar{\alpha}^h({s}^{hm})=\infty$ and $\sum_{h=0}^{\infty} (\bar{\alpha}^h({s}^{hm}))^2<\infty$ for all $s^{hm}\in\mathcal{S}$. 
%Based on the value of $\bar{v}^h(s^{hm},a_m)$ and a pre-defined threshold $u_{TH}$, we can identify the aggregated IDoS attack based on Definition \ref{def:IDoS}. 

\section{Numerical Experiments and Analysis}
\label{sec:casestudy}
We provide a numerical case study in this section to corroborate the results. Let the set of category label $\mathcal{S}=\{s_{AL},s_{NL},s_{PL}\}$ be the location of the attacks where $s_{AL}$, $s_{NL}$, and $s_{PL}$ represent the application layer, network layer, and physical layer, respectively. 
%, based on a combination of the levels of lethality, criticality, defensibility, and resiliency in Section \ref{sec:attack}. 
%For example, `a new virus with the potential to spread quickly' is of medium risk and `Mission-critical application failures' is of high risk.
%https://www.cisecurity.org/cybersecurity-threats/alert-level/
We consider the special case where $\tau^k, \forall k\in \mathbb{Z}^{0+}$, is an exponential random variable with a constant rate $\beta>0$, i.e., $z(\tau |s^k, \theta^k)= \beta e^{{-\beta \tau}}, \forall s^k\in \mathcal{S}, \theta^k\in\Theta, \tau\in [0,\infty)$. 
Fig. \ref{fig:statedynamics} illustrates an exemplary sequential attack where the vertical dashed lines represent the attack stages $k\in \mathbb{Z}^{0+}$. 
The length of the rectangles between the $k$-th and $(k+1)$-th vertical dash lines represents the $k$-th attack's duration $\tau^k$. 
The height of each square distinguishes the attack's type; i.e., tall and short rectangles represent feints and real attacks, respectively. 

\begin{figure}[h]
\centering
\includegraphics[width=1 \textwidth]{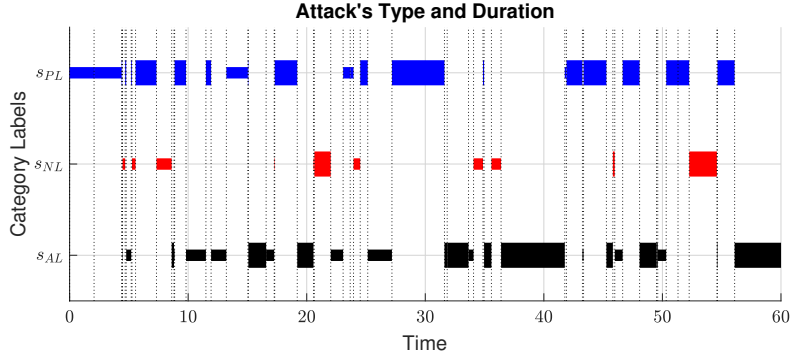}
\caption{ 
The sequential arrival of feints and real attacks with different category labels.  
}
\label{fig:statedynamics}
\end{figure}

The inspection time $\tau_{IN}^{h,m}$, as the summation of $m$ i.i.d. exponential random variables, is an Erlang distribution with shape $m$ and and rate $\beta>0$, i.e., 
$ \bar{z}(\tau|x^h;\bar{\theta}^{h})=\frac {\beta^{m} \tau^{m-1}e^{-\beta \tau}}{(m-1)!},\forall x^h\in \mathcal{X},\forall \bar{\theta}^{h},\tau\in [0,\infty) $. 
Consider a single threshold $N=1$ and $\mathcal{N}=\{\bar{\tau}_0(s^k,\theta^k),\bar{\tau}_N(s^k,\theta^k) \}$. 
%We further assume that given sufficient inspection time, the human operator does not make incorrect security decision, i.e., $\bar{p}_{ID}^n(\theta^k, s^k)=0, \forall  s^k\in \mathcal{S}, \theta^k\in\Theta, \forall n\in \{0,1,\cdots,N\}$.  
Then, $\Pr(w^{hm}|x^h,a_m;\bar{\theta}^{h})$ in \eqref{eq:probability of making decision} has the following closed form in \eqref{eq:closedformProb} for correct decisions, i.e.,  $\theta^{hm}=\theta_{FE},w^{hm}=w_{FE}$ or $\theta^{hm}=\theta_{RE},w^{hm}=w_{RE}$. 
% Now, we compute the pdf by 
% \begin{equation}
% \begin{split}
%         \Pr(W=w_{RE})=\int_{\bar{\tau}(s^k)}^{\infty} \Pr(W=w_{RE},\tau) d\tau = 
%     \int_{\bar{\tau}(s^k)}^{\infty} \Pr(W=w_{RE}|\tau)f(\tau) d\tau \\
% =   \int_{\bar{\tau}(s^k)}^{\infty}\bigg[ 2/(1+e^{-(\tau_{IN}^k-\bar{\tau}(s^k))})-1 \bigg] \frac{\beta^m\tau^{m-1}e^{\beta \tau}}{(m-1)!} d\tau. 
% \end{split}
% \end{equation}
% we do not have closed form as shown in Mathematica (it contains zeta function). 
\begin{equation}
    \begin{split}
    \label{eq:closedformProb}
       \Pr(w^{hm}|s^{hm},a_m;{\theta}^{hm}) &= \int_{\bar{\tau}_N(s^{hm},\theta^{hm})}^{\infty} \bar{p}_{CD}^N( s^{hm},\theta^{hm}) \frac{\beta^m\tau^{m-1}e^{-\beta \tau}}{(m-1)!} d\tau \\
&=\bar{p}_{CD}^N( s^{hm},\theta^{hm})  (1-CDF( \bar{\tau}_N(s^{hm},\theta^{hm}) )),
    \end{split}
\end{equation}
where the Cumulative Distribution Function (CDF) of the random variable $\tau_{IN}^{h,m}$ is
\begin{equation}
\label{eq:cdf}
    CDF( \bar{\tau}_N(s^{hm},\theta^{hm}) ) = 1- \sum_{n=0}^{m-1} \frac{1}{n!} e^{-\beta \bar{\tau}_N(s^{hm},\theta^{hm}) } (\beta \bar{\tau}_N(s^{hm},\theta^{hm}) )^n. 
\end{equation}

%Then, $\hat{p}_{CD}(x^h,a_m)$
%Then, $ \Pr(W=w_{UN}|\theta_{RE},s^k) =1- \Pr(W=w_{RE}|\theta_{RE},s^k)$ and $ \Pr(W=w_{FE}|\theta_{RE},s^k)=0$. 

% The consolidated cost in \eqref{eq:consolidated cost} becomes a function of $s^{hm}$, $a_m$, and $\theta^{hm}$, i.e., 
% \begin{equation*}
%  \begin{split}
%      \bar{c}(s^{hm}, a_m; {\theta}^{hm})  := (m-1)c_{UN} + \sum_{w^{hm}\in \mathcal{W}} \Pr(w^{hm}|s^{hm},a_m;{\theta}^{hm}) c(w^{hm},s^{hm};\theta^{hm}). 
%  \end{split}
% \end{equation*}

\subsection{Value Iteration and TD Learning}
Since PDF $z$ is independent of $s^k$ and $\theta^k$, we can compute EACC in \eqref{eq:DPspecialEACR} by value iteration. 
As shown in Fig. \ref{fig:valueiteration}, the estimated values of EACC under three different category labels, i.e., $\bar{u}(s_{AL},a_m)$, $\bar{u}(s_{NL},a_m)$, and $\bar{u}(s_{PL},a_m)$ in black, red, and blue, respectively, all converge within $40$ iterations. 
When the exact model is unknown, we use TD learning in \eqref{eq:TDspecial} to estimate EACC $\bar{u}(s^{hm,a_m})$. 
In particular, we choose $\bar{\alpha}^h({s}^{hm})=\frac{k_c}{k_{TI}(s^{hm})-1+k_c}$ as the learning rate where $k_c\in (0,\infty)$ is a constant parameter and $k_{TI}(s^{hm})\in \mathbb{Z}^{0+}$ is the number of visits to $s^{hm}\in\mathcal{S}$ up to stage $h\in  \mathbb{Z}^{0+}$. 
We illustrate the convergence of TD learning in Fig. \ref{fig:TDkc10} with $k_c=6$.  
Since the number of visits to  $s_{AL}$, $s_{NL}$, and $s_{PL}$ depends on the transition probability $\bTr$, the learning stages for three category labels are of different lengths. 

\begin{figure}[h]
    \centering % <--a
\begin{subfigure}{0.5\textwidth}
  \includegraphics[width=\linewidth]{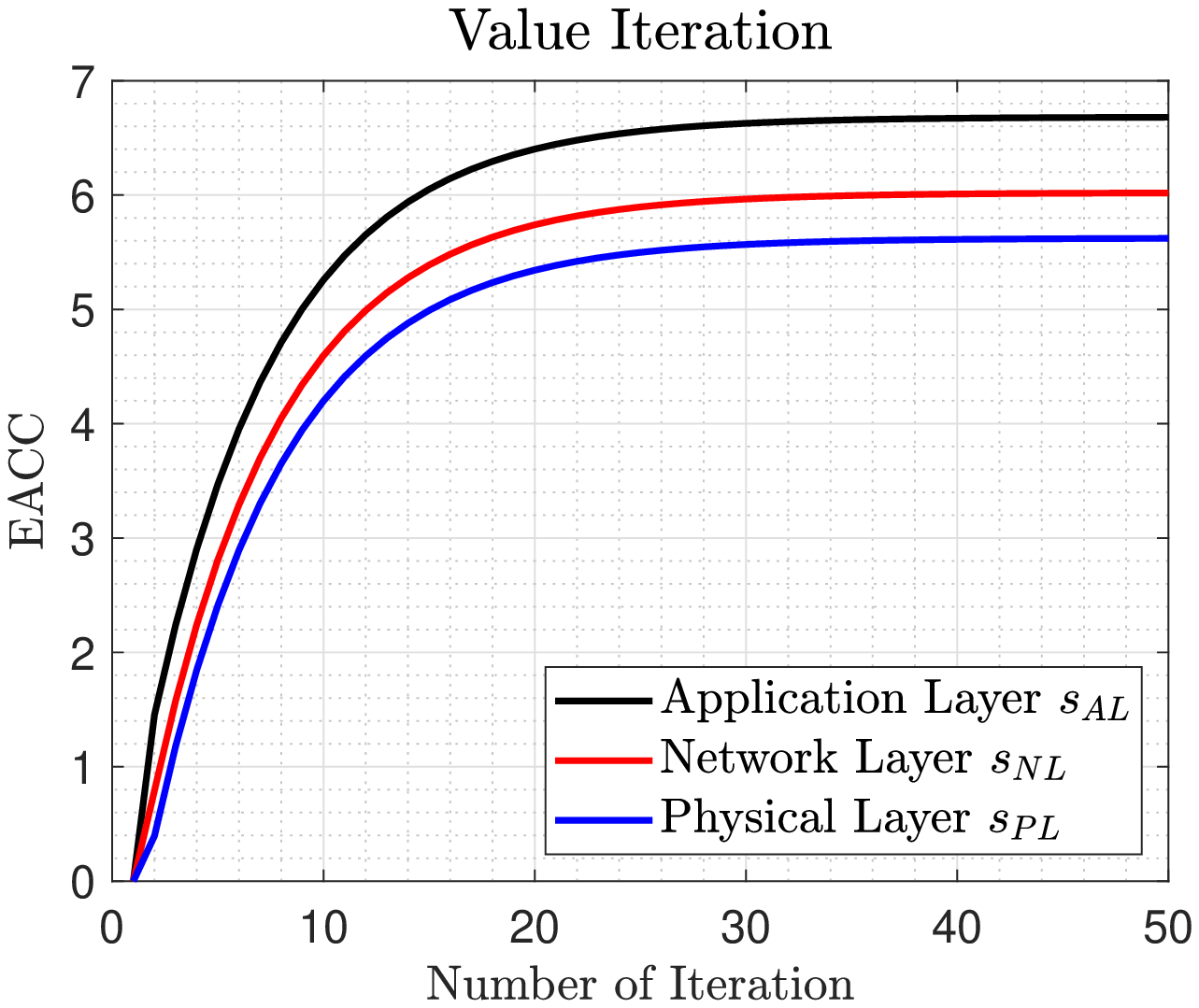}
  \caption{\label{fig:valueiteration} 
  Theoretical value by  value iteration. 
   }
\end{subfigure}\hfil 
\begin{subfigure}{0.5\textwidth} % <--b
  \includegraphics[width=\linewidth]{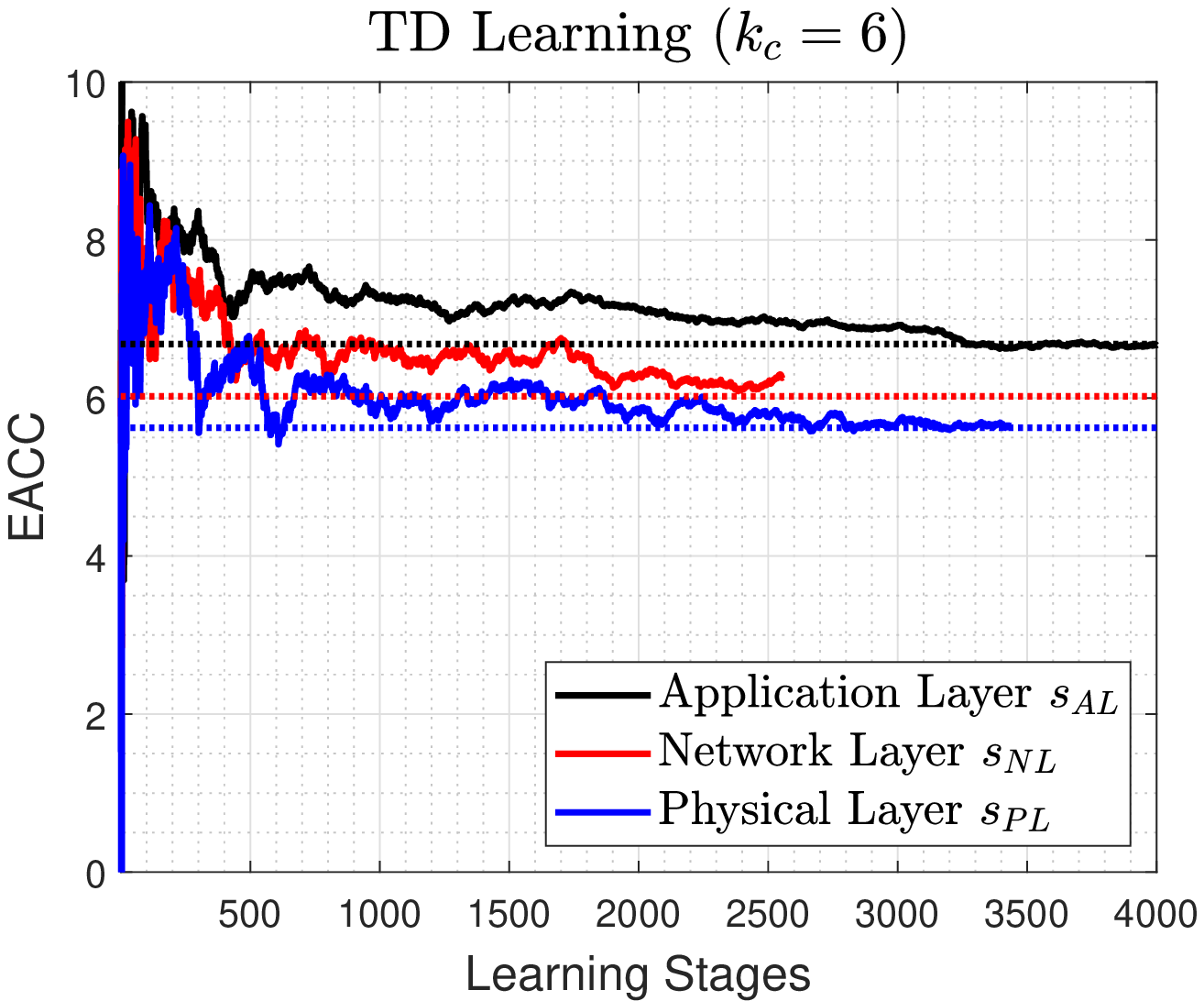}
  \caption{\label{fig:TDkc10} 
  Simulated value by  TD learning. 
 }
\end{subfigure}\hfil 
\caption{
Computation and learning of EACC. 
}
\label{fig:correct}
\end{figure}

\begin{figure}[h]
    \centering % <--a
\begin{subfigure}{0.5\textwidth}
  \includegraphics[width=\linewidth]{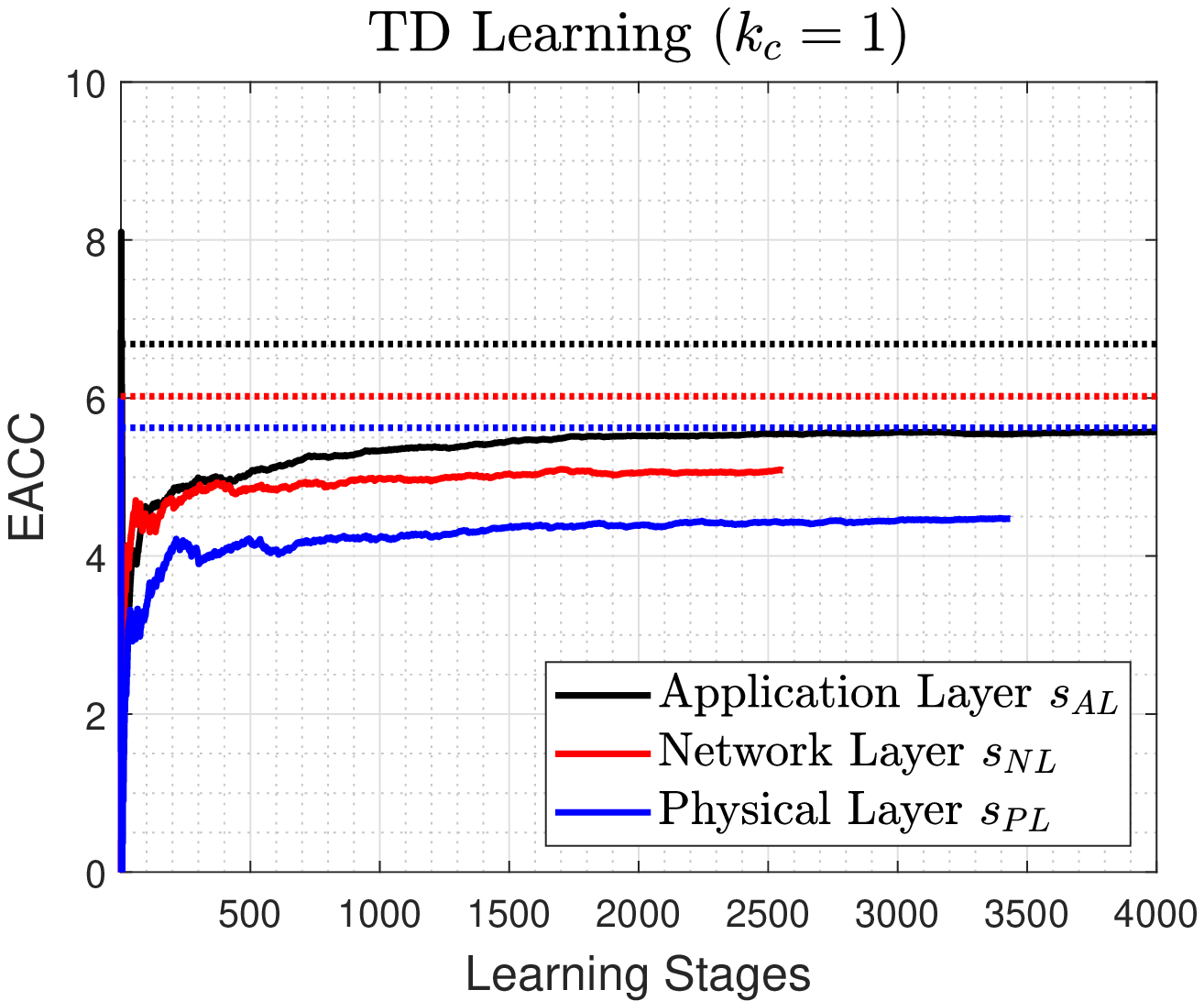}
  \caption{\label{fig:TDkc5} 
  Learning rate decreases too fast $k_c=1$. 
   }
\end{subfigure}\hfil 
\begin{subfigure}{0.5\textwidth} % <--b
  \includegraphics[width=\linewidth]{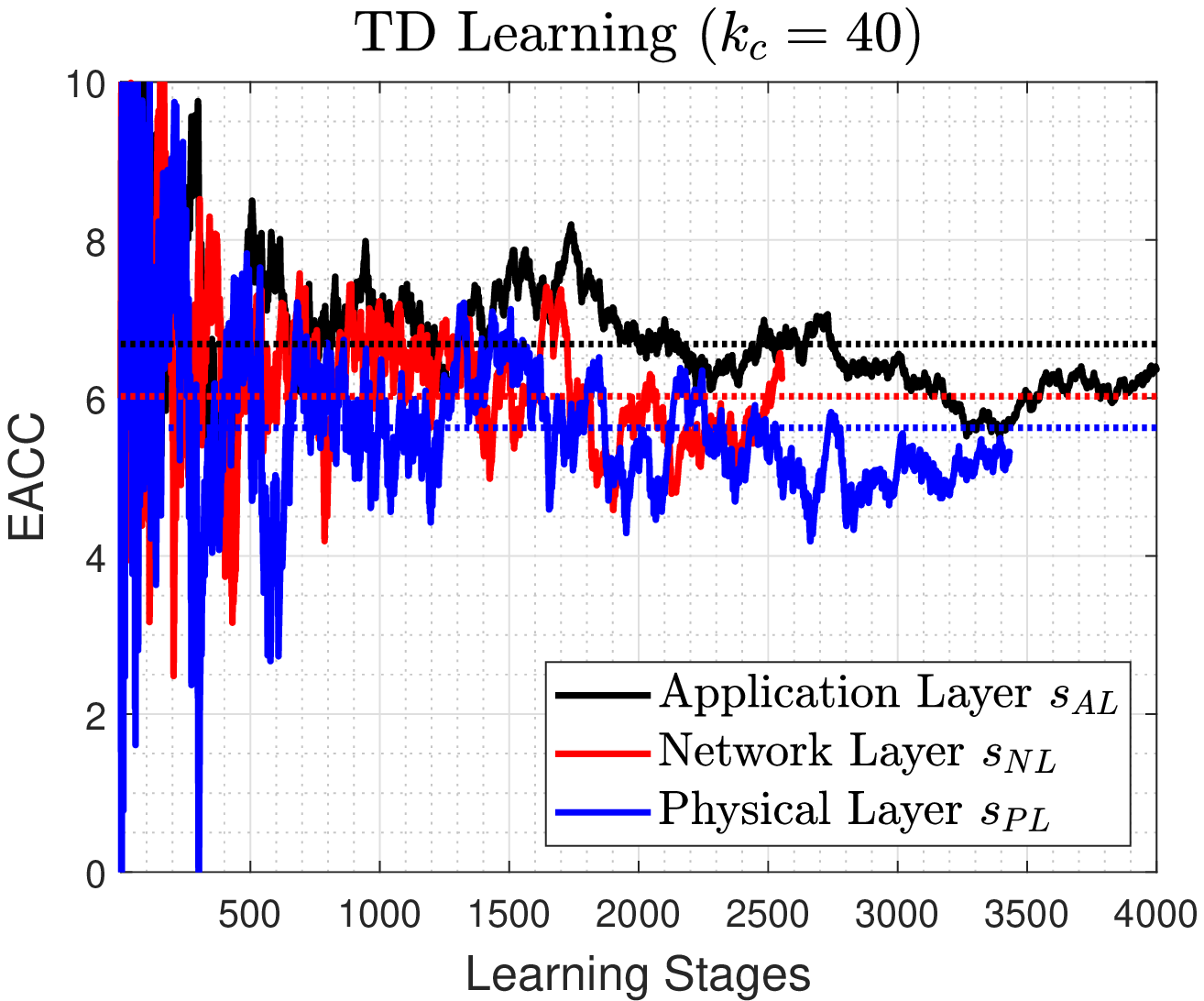}
  \caption{\label{fig:TDkc80} 
Learning rate decreases too slow $k_c=40$.
 }
\end{subfigure}\hfil 
\caption{
Improper values of $k_c$ lead to unsatisfactory learning performances in finite steps. 
}
\label{fig:TDwrong}
\end{figure}

If $k_c$ is too small as shown in Fig. \ref{fig:TDkc5}, the learning rate
decreases so fast that new observed samples hardly update the estimated value. Then, it takes  longer learning stages to learn the correct value. 
On the contrary, if  $k_c$ is too large as shown in Fig. \ref{fig:TDkc80},  the learning rate decreases so slow that new samples contribute significantly to the current estimated value, which causes a large variation and a slow convergence.

\subsection{Severity Level and Aggregated Risk without Attention Management} 
\label{sec:numericalm=1}
%In this section, we fix m=1, and inspect how the frequency beta affect the performance. That is when supply bigger or smaller than the demand. 
When there are no AM strategies, i.e., $m=1$, the human operator switches attention whenever a new attack arrives. 
Then, 
\eqref{eq:cdf} can be simplified as  
$
  CDF( \bar{\tau}_N(s^{hm},\theta^{hm}) ) = 
1-  e^{-\beta \bar{\tau}_N(s^{hm},\theta^{hm}) } , \forall s^{hm}\in \mathcal{S}, \theta^{hm}\in \Theta
$, which is an exponential function of the product of the rate $\beta>0$ and the threshold  $ \bar{\tau}_N(s^{hm},\theta^{hm})>0$. 
Thus, $\hat{p}_{CD}(x^h,a_m)$ in \eqref{eq:pCDbar} decreases monotonously as the value of the product $\beta \bar{\tau}_N(s^{hm},\theta^{hm})$ increases. 
Based on Lemma \ref{lemma:independent}, we can write the consolidated severity level as $1-\hat{p}_{CD}(s^{hm},a_m)$ without loss of generality. 
% Similarly, the values of $\mathbb{E}_{\theta^{hm}} [\bar{c}(s^{hm},a_m;\theta^{hm})]$ in  \eqref{eq:rbarspecial} and $\bar{u}(s^0;a_m)$ in \eqref{eq:DPspecialEACR} both
% increase exponentially as the value of the product $\beta \bar{\tau}_N(s^{hm},\theta^{hm})$ increases. 
% We summarize the above results in Proposition \ref{prop:exp_product}. 
% \begin{proposition}
% \label{prop:exp_product}
% If $m=1$ and %$\tau^k, \forall k\in \mathbb{Z}^{0+}$, is an exponential random variable with PDF 
% $z(\tau |s^k, \theta^k)= \beta e^{{-\beta \tau}}, \forall s^k\in \mathcal{S}, \theta^k\in\Theta, \tau\in [0,\infty)$, then both the severity level and the aggregated risk of IDoS attacks increase exponentially with the value of $\beta \bar{\tau}_N(s^{hm},\theta^{hm})$. 
% \end{proposition}
% \begin{remark}[\textbf{Supply and Demand of Attention Resources}]
%  On the one hand, as $\beta$ increases, the sequential attacks arrive with a higher frequency on average, which results in a higher demand of attention resources from the human operator. 
%  On the other hand, as $\bar{\tau}_N$ increases, the human operator's \textit{inspection efficiency} decreases. Then, the operator requires a longer inspection time to determine the attack's type, which results in a lower supply of attention resources. 
%  Without a proper AM strategy (i.e., $m=1$), Proposition \ref{prop:exp_product} quantifies the product of the supply and demand as the determinant of the severity level and the aggregated risk of IDoS attacks. 
% \end{remark}
\begin{figure}[thb]
    \centering % <--a
\begin{subfigure}{0.5\textwidth}
  \includegraphics[width=\linewidth]{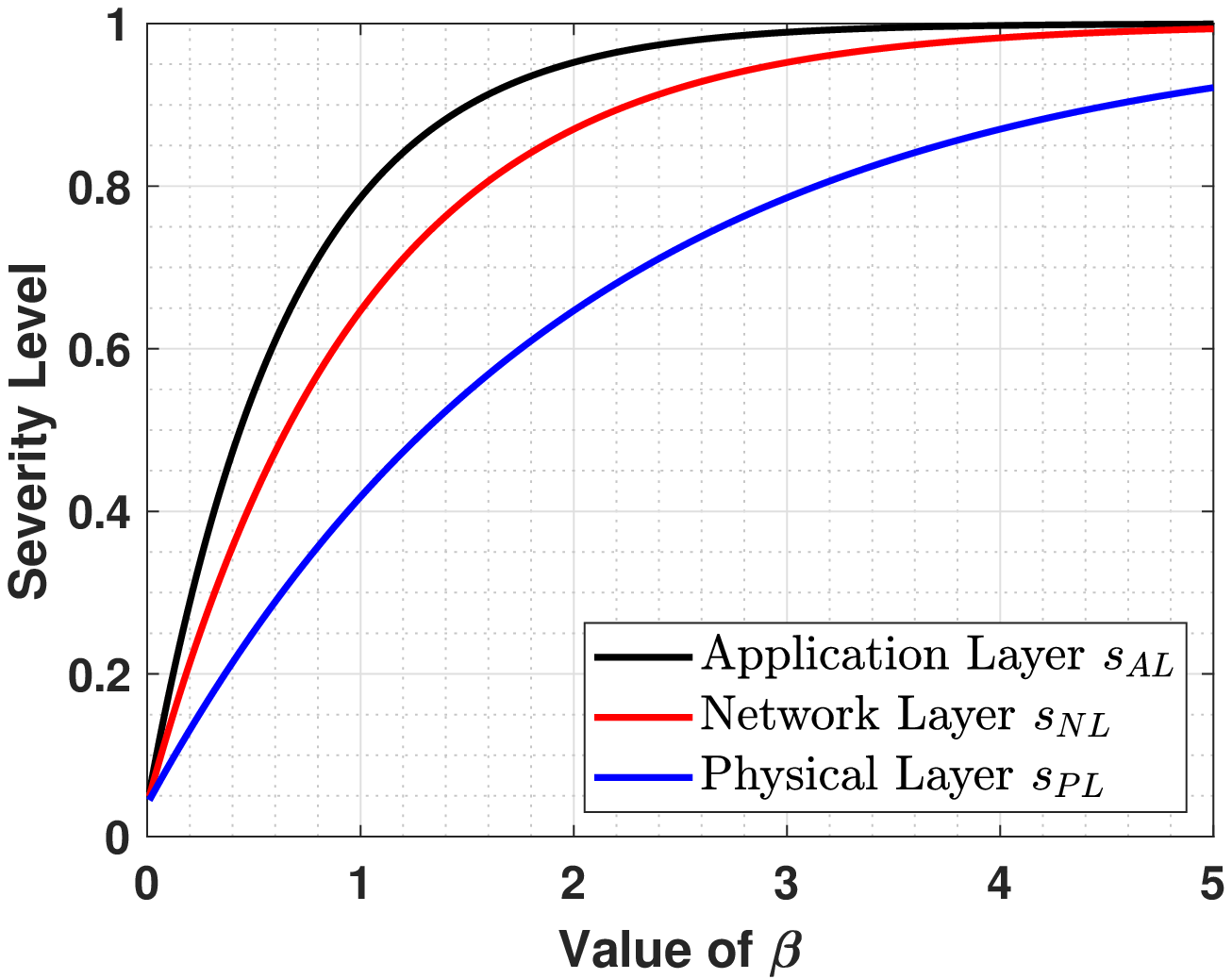}
  \caption{\label{fig:Severitylevel} 
 Severity level vs. $\beta$. 
   }
\end{subfigure}\hfil 
\begin{subfigure}{0.5\textwidth} % <--b
  \includegraphics[width=\linewidth]{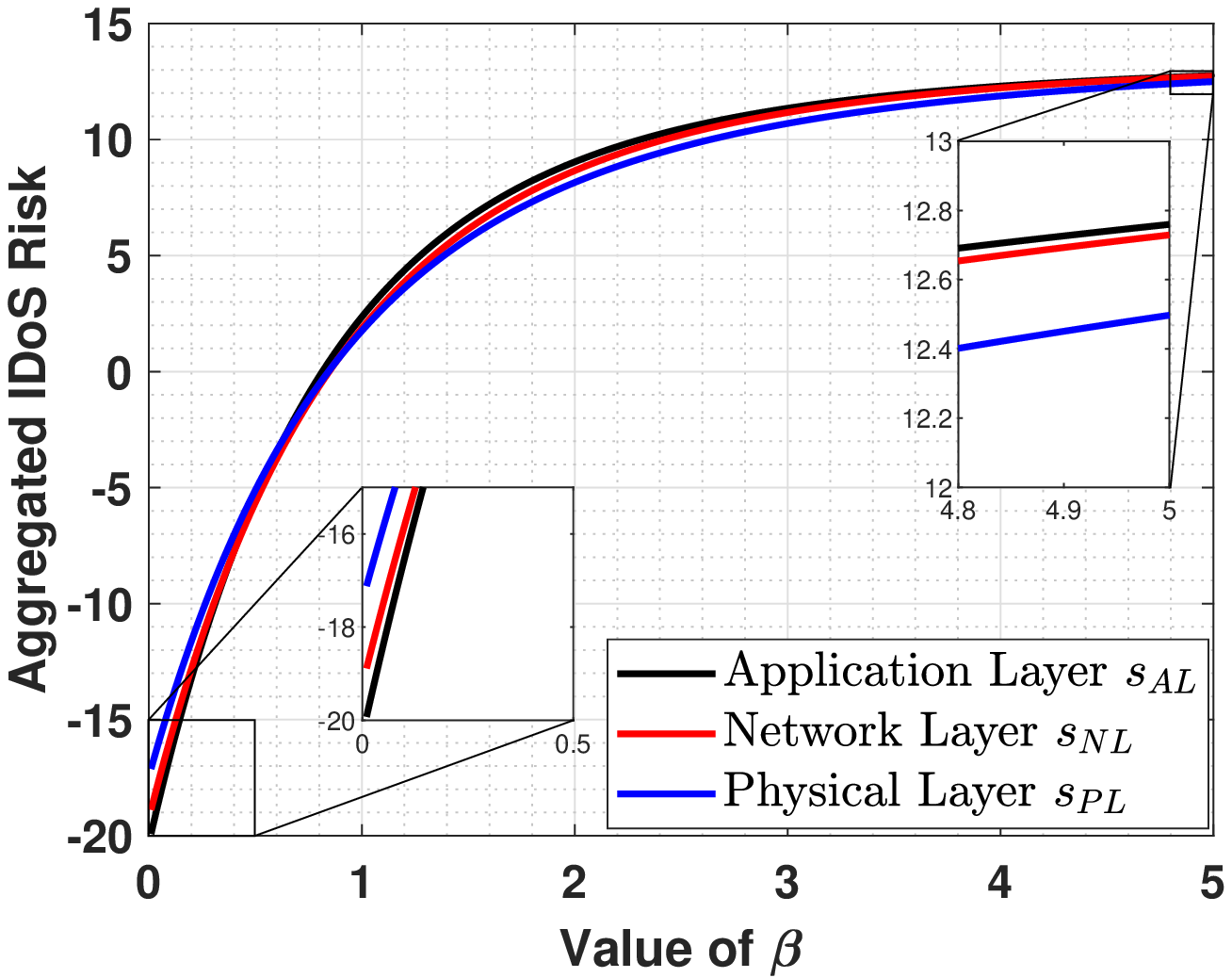}
  \caption{\label{fig:Aggregatedrisk} 
Aggregated risk vs. $\beta$.
 }
\end{subfigure}\hfil 
\caption{
 Severity level and aggregated risk of IDoS attacks under $s_{AL}$, $s_{NL}$, and $s_{PL}$ in black, red, and blue, respectively. 
 The insert boxes magnify the selected areas. 
}
\label{fig:withoutAM}
\end{figure}
%Fig. \ref{fig:withoutAM} corroborates the results in Proposition \ref{prop:exp_product}.  
Let $\bar{\tau}_N(s_{AL},\theta^{hm})\geq \bar{\tau}_N(s_{NL},\theta^{hm})\geq \bar{\tau}_N(s_{PL},\theta^{hm})$, we plot the  severity level, i.e., $1-\hat{p}_{CD}(s^{hm},a_m)$, for different values of rate $\beta\in (0,5)$ in Fig. \ref{fig:Severitylevel}. 
We illustrate the aggregated IDoS risk versus $\beta\in (0,5)$ in Fig. \ref{fig:Aggregatedrisk}. 
As magnified by two insert boxes, the aggregated IDoS risk under $s_{AL}$, $s_{NL}$, and $s_{PL}$ can change orders for different $\beta$.

\subsection{Severity Level and Aggregated Risk with Attention Management}
\label{sec:numericalm>1}
We illustrate how different AM strategies affect the severity level and the aggregated risk of IDoS attacks 
\begin{figure}[h]
    \centering % <--a
\begin{subfigure}{0.5\textwidth}
  \includegraphics[width=\linewidth]{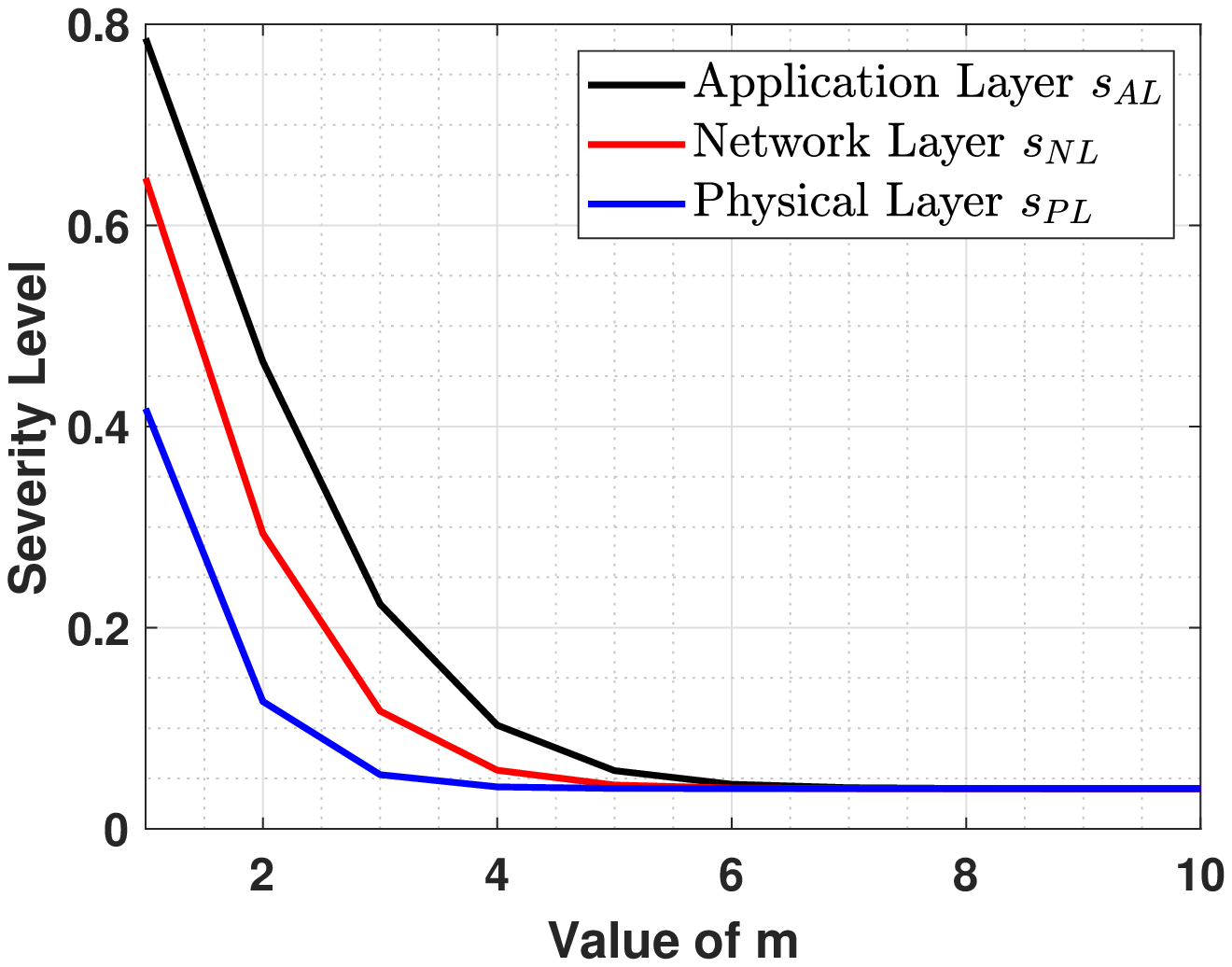}
  \caption{\label{fig:SeveritylevelVSmbeta1} 
 Small arrival rate $\beta=1$. 
   }
\end{subfigure}\hfil 
\begin{subfigure}{0.5\textwidth} % <--b
  \includegraphics[width=\linewidth]{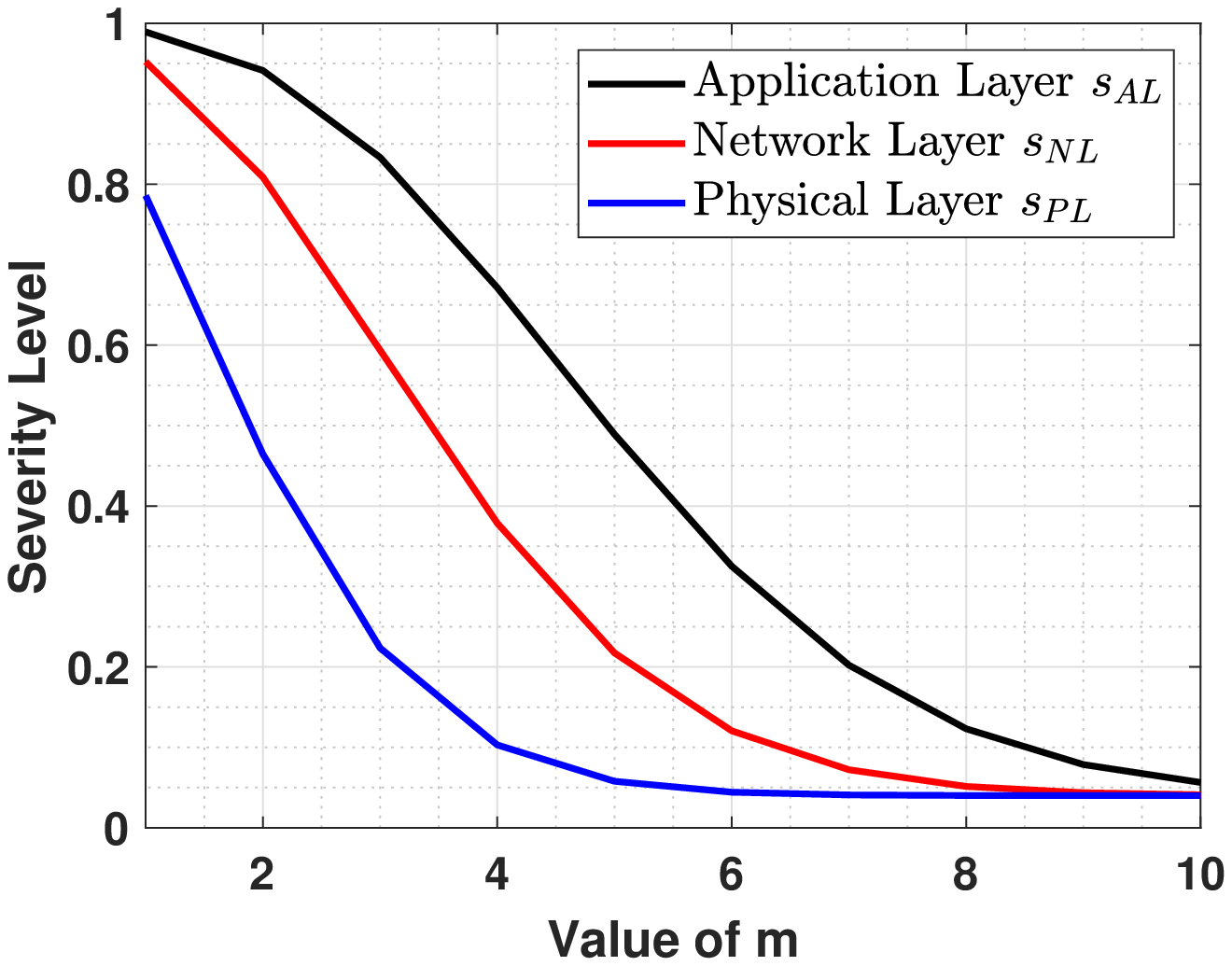}
  \caption{\label{fig:SeveritylevelVSmbeta3} 
Large arrival rate $\beta=3$. 
 }
\end{subfigure}\hfil 
\caption{
Severity levels of IDoS attacks under $s_{AL}$, $s_{NL}$, and $s_{PL}$ in black, red, and blue. 
}
\label{fig:withAM1}
\end{figure}
\begin{figure}[h]
    \centering % <--a
\begin{subfigure}{0.5\textwidth}
  \includegraphics[width=\linewidth]{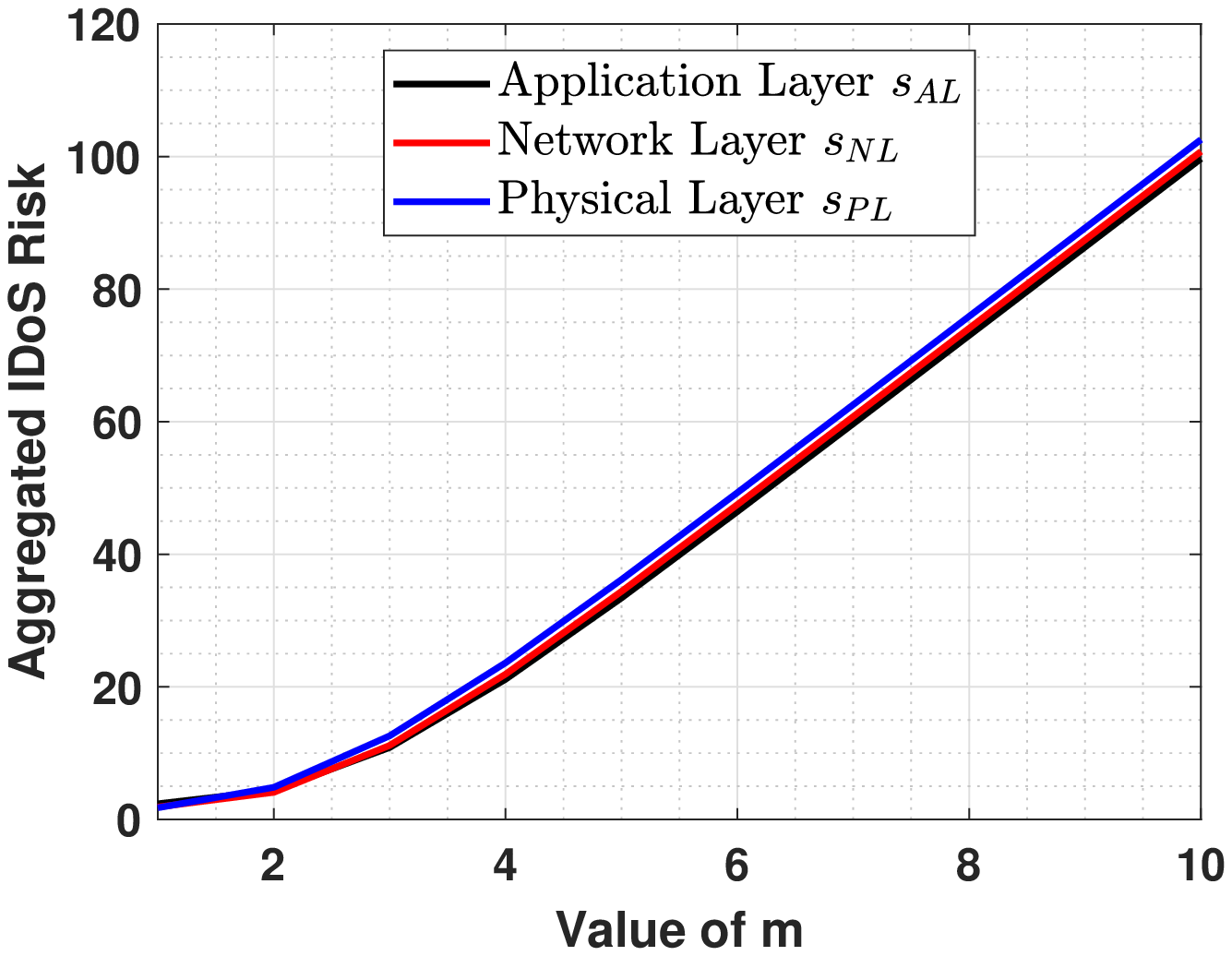}
  \caption{\label{fig:AggregatedriskVSmcUN20} 
High cost $c_{UN}(s^{hm}, \theta^k)=20,\forall s^{hm}, \theta^{hm}$. 
   }
\end{subfigure}\hfil 
\begin{subfigure}{0.5\textwidth} % <--b
  \includegraphics[width=\linewidth]{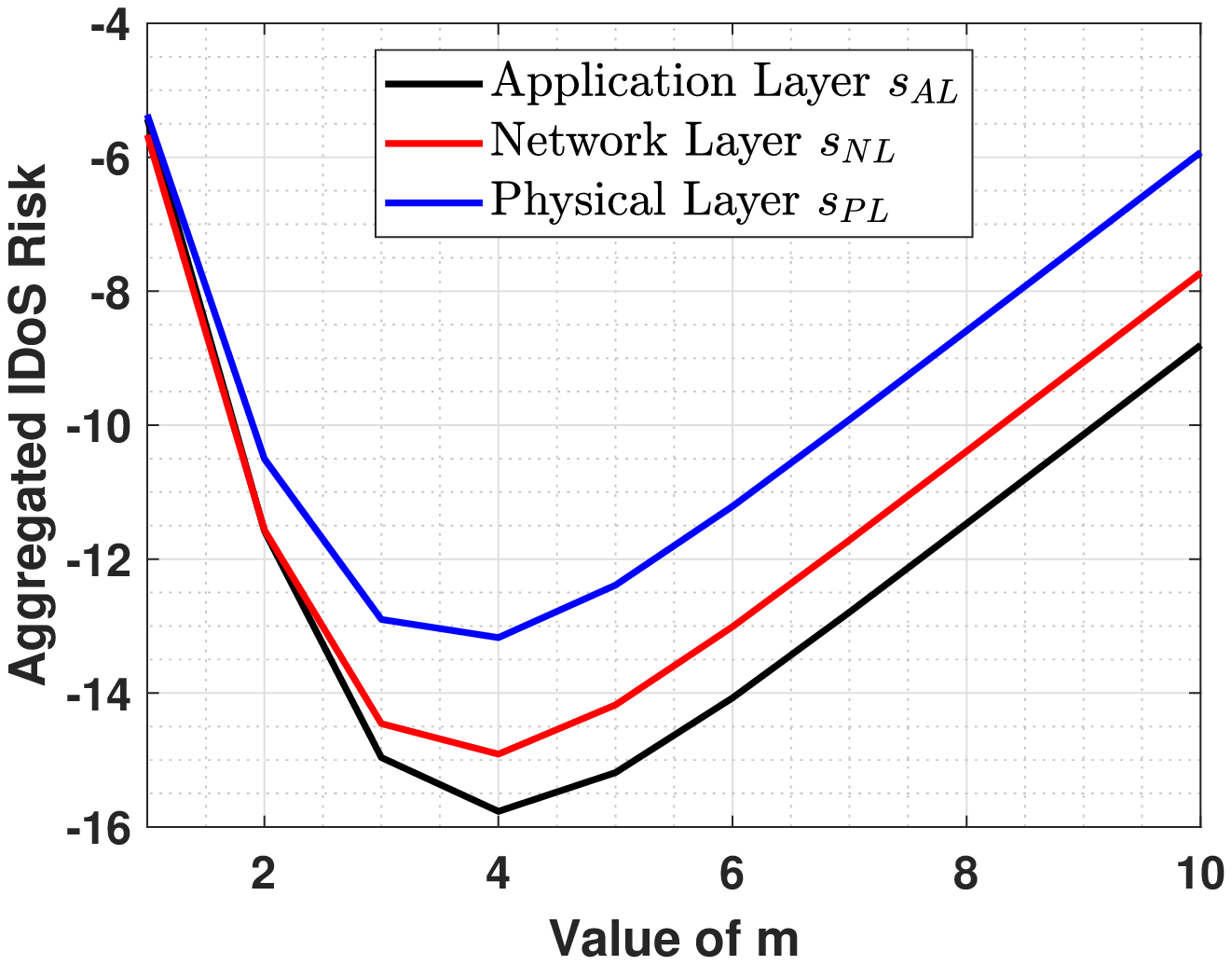}
  \caption{\label{fig:AggregatedriskVSmcUN02}
Low cost $c_{UN}(s^{hm}, \theta^k)=0.2,\forall s^{hm}, \theta^{hm}$. 
 }
\end{subfigure}\hfil 
\caption{
Aggregated IDoS risks under $s_{AL}$, $s_{NL}$, and $s_{PL}$ in black, red, and blue. 
}
\label{fig:withAM2}
\end{figure}
%We corroborate Lemma \ref{lemma:SLwithAM} and Proposition  \ref{prop:ARiskwithAM} 
in Fig. \ref{fig:withAM1} and Fig. \ref{fig:withAM2}, respectively, where $\bar{p}_{CD}^N(s^{hm}, \theta_{FE})=1$ and $\bar{p}_{CD}^N(s^{hm}, \theta_{RE})=0.9$ for all $s^{hm}\in \mathcal{S}$, and $b_{FE}=0.6$. 
As shown in Fig. \ref{fig:withAM1}, the severity level strictly decreases to $0.04$ as $m$ increases regardless of different values of $\beta$. 
We choose a small arrival rate $\beta=1$ in Fig. \ref{fig:SeveritylevelVSmbeta1} and a large rate  $\beta=3$ in Fig. \ref{fig:SeveritylevelVSmbeta3}. 
For a given $m\in \mathbb{Z}^+$, a larger arrival rate results in a higher severity level, and more alerts need to be made inconspicuous to reduce the severity level. 

We choose $\beta=1$ and observe the linear increase of the aggregated IDoS risk when $m$ is sufficiently large in Fig. \ref{fig:withAM1}. 
We investigate how high and low uncertainty costs $c_{UN}$ affect the aggregated IDoS risk in Fig. \ref{fig:AggregatedriskVSmcUN20} and Fig. \ref{fig:AggregatedriskVSmcUN02}, respectively. 
If the uncertainty cost is much higher than the expected reward of correct decision-making, then the detailed inspection and correct security decisions are not of priority. 
As a result, the `spray and pray' strategy should be adopted; i.e., let the operator inspect as many alerts as possible and use the high quantity to compensate for the low quality of these inspections.  
%Instead, we should let the operator inspect as many alerts as possible to reduce the aggregated IDoS risk; i.e., 
Under this scenario, $\bar{u}(s^{hm},a_m)$ increases with $m\in \mathbb{Z}^+$ for all $s^{hm}\in \mathcal{S}$ as shown in Fig. \ref{fig:AggregatedriskVSmcUN20}. 
If the uncertainty cost is of the same order as the inspection reward on average, then increasing $m$ in a certain range (e.g., $m\in\{1,2,3,4,\}$ in Fig. \ref{fig:AggregatedriskVSmcUN02}) can increase the probability of correct decision-making and reduce the aggregated IDoS risk. 
%inspection correctness is of high priority and 
The loss of alert omissions outweighs the gain of detailed inspection when $m$ is beyond that range. 
\begin{remark}[\textbf{Rational Risk-Reduction Inattention}]
In Fig. \ref{fig:AggregatedriskVSmcUN02}, a small $m$ represents a coarse inspection with a large number of alerts while a large $m$ represents a fine inspection of a small number of alerts. 
The U-shape curve reflects that the minimum risk is achieved with a proper level of \textit{intentional inattention} to alerts, which we refer to as the \textit{law of rational risk-reduction inattention}. 
\end{remark}

\section{Conclusion}
\label{sec:conclusion}
%Innate human vulnerabilities, including bounded rationality and limited attention, has been a great concern
Attentional human vulnerability can be exploited by attackers and leads to a new class of advanced attacks called the Informational Denial-of-Service (IDoS) attacks. 
IDoS attacks intensify the shortage of human operators' cognitive resources in this age of information explosion by generating a large number of feint attacks. 
These feints distract operators from detailed inspections of the alerts, which significantly decrease the accuracy of their security decisions and undermine cybersecurity.
We have formally introduced the IDoS attacks and established a quantitative framework that provides a theoretic underpinning to the IDoS attacks under limited attention resources. 
We have developed human-assistive security technologies that intentionally make selected alerts inconspicuous so that  human operators can pay sustained attention to critical alerts. 

We have modeled the sequential arrival of IDoS attacks as a semi-Markov process and the probability of correct decision-making as an increasing step function concerning the inspection time. 
Dynamic Programming (DP) and Temporal-Difference (TD) learning have been used to represent long-term costs and evaluate human performance in real-time, respectively.  
We have established the \textit{computational equivalency} between the DP representation of the Cumulative Cost (CC) (resp. Expected Cumulative Cost (ECC)) and the Aggregated Cumulative Cost (ACC) (resp. Expected Aggregated Cumulative Cost (EACC)). % when $z$ is independent of $s^k$ and $\theta^k$. 
This equivalency has reduced the dimension of the state space and the computational complexity of the value iteration and online learning algorithms. %has reduced the function dependence from the consolidated state %$x^h\in \mathcal{X}$ 
%to %$s^{hm}\in \mathcal{S}$ 
%and simplified the value iteration and online learning. 
%, and the DP representation of Expected Cumulative Cost (ECC) and Expected Aggregated Cumulative Cost (EACC). 

From the case study, we have validated that both the severity level and the aggregated risk of IDoS attacks increase exponentially with the product of the attack's arrival rate and the operator's inspection efficiency. When Attention Management (AM) strategies are applied, we have observed that the severity level strictly decreases with the inspection time. %and yields a fundamental limit of the minimum severity level that can be achieved under all AM strategies. 
%When attention management strategies are applied, the severity level strictly decreases with the period length $m\in \mathbb{Z}^+$ yet has a limit of $1-\underline{p}(s^{hm})$. 
%The impact of $m$ on the IDoS risk depends on the uncertainty cost. 
We have arrived at the `{less is more}' security principle in cases where correctly identifying the real and feint attacks is of high priority. 
%humans encounter a heavy attentional load. 
It has been shown that inspecting a small number of selected alerts with sustained attention outperforms dividing the limited attention to inspect all alerts. 
%When the cost is small, the results illustrate that the principle of `\textit{less is more}' applies to cybersecurity; i.e., under a certain degree $m\leq \underline{m}(s^{hm})$, it is more beneficial to inspect a small number of alerts with sustained attention than dividing the limited attention to all alerts. 

The future work would focus on coordinating multiple human operators to share the cognition load. 
Based on the literature of cognitive science and existing results of human experiments, we would develop detailed models of human attention, reasoning, and risk-perceiving to better characterize human factors in cybersecurity. 
Finally, we would extend the periodic AM strategies to adaptive ones that use the feedback of the alerts' category labels and the operator's current cognition status reflected by bio-sensors. 

% feedback would be adopted to extend the periodic AM strategies to adaptive ones. 
% For example, we can use bio-sensors, e.g., eye-trackers, to record the current cognition status of human operators. Then, based on the %estimated risk-level of the alerts and 
% human cognition status, the assistive technology would determine whether to highlight the alert upon arrival. 

\bibliographystyle{IEEEtran}
\bibliography{IDoS}

\end{document}